\documentclass[11pt]{article}

\usepackage{graphicx} 
\usepackage{amsmath}
\usepackage{amssymb} 
\usepackage{bm} 
\usepackage{float}
\usepackage{placeins}
\usepackage{subcaption} 
\usepackage{subcaption}
\usepackage[page]{appendix} 

\usepackage{dcolumn}

\usepackage[margin=1in]{geometry}
\usepackage{setspace}
\allowdisplaybreaks

\usepackage{tikz}

\usepackage{pythonhighlight}

\usepackage{amsthm}

\theoremstyle{definition}
\newtheorem{definition}{Definition}[section]

\newtheorem{lemma}{Lemma}[section]

\newtheorem{theorem}{Theorem}[section]

\newtheorem{proposition}{Proposition}[section]

\newtheorem{corollary}{Corollary}[section]

\usepackage{natbib}
\usepackage{multicol} 
\usepackage{authblk}

\usepackage{xcolor} 

  {\color{purple}} 
  {}                 

\usepackage{mdframed}
\mdfsetup{
  innertopmargin=4pt,
  innerbottommargin=8pt,
  innerleftmargin=8pt,
  innerrightmargin=8pt
}

\usepackage{booktabs}      
\usepackage{float}         
\usepackage{xcolor}        
\usepackage{array}         
\usepackage{caption}       
\usepackage{longtable}     
\usepackage{threeparttable} 
\usepackage{makecell}      
\usepackage{pdflscape}     
\usepackage{graphicx}      
\usepackage{colortbl}      
\usepackage{titlesec}
\usepackage{enumitem}
\usepackage[hidelinks]{hyperref}
\usepackage{xr-hyper}
\usepackage[T1]{fontenc}
\usepackage[utf8]{inputenc}
\usepackage{newtxtext,newtxmath}
\titlespacing*{\section}{0pt}{0.9\baselineskip}{0.45\baselineskip}
\titlespacing*{\subsection}{0pt}{0.75\baselineskip}{0.35\baselineskip}
\titlespacing*{\subsubsection}{0pt}{0.6\baselineskip}{0.3\baselineskip}

\newcommand{\ztil}{\bm{\tilde{z}}}
\newcommand{\xtil}{\bm{\tilde{x}}}
\newcommand{\Atil}{\bm{\tilde{A}}}
\newcommand{\Btil}{\bm{\tilde{B}}}

\newcommand{\pitil}{\bm{\tilde{\pi}}}
\newcommand{\rhotil}{\bm{\tilde{\rho}}}

\begin{document}

\title{\textbf{When Does Static Willingness to Pay Mislead? \\
A Framework for Dynamic Hedonic Valuation}\thanks{\noindent I am especially grateful to Nikhil Agarwal, Isaiah Andrews, Miguel Ballester, Abhijit Banerjee, Nano Barahona, Colin Camerer, Ian Crawford, Glenn Ellison, Drew Fudenberg, Eric Gao, Matt Polisson, Augustus Smith, and Chen Zhen for helpful comments and discussions.}}

\author[ ]{Josephine Auer\thanks{\noindent
E-mail: \texttt{joauer@mit.edu}
}}

\affil[ ]{MIT}

\date{\today}
\maketitle

\begin{abstract}

\noindent Many policy counterfactuals depend on how consumers value product attributes such as sugar, caffeine, alcohol, or emissions. Standard hedonic and differentiated-products models typically impose time-separable preferences. But when attributes are habit forming, current consumption can shift future marginal valuations, so static willingness-to-pay may be insufficient for policy counterfactuals. I develop a nonparametric revealed-preference framework for dynamic hedonic valuation, deriving necessary and sufficient conditions for rationalising observed prices and choices. Using cereal scanner data, I show that the hedonic representation restricts prices, while habits improve behavioural coherence conditional on that representation. The framework diagnoses when static willingness-to-pay is defensible for policy.


\vspace{2ex}

\noindent\textit{Keywords}: 
Hedonic valuation, habit formation, identification, revealed preference, intertemporal choice, nonparametric analysis

\noindent \textit{JEL Classification}: D11, D12, D90, C61, C14, L66

%

\end{abstract}

\newpage
\vspace{-1em}
\section{Introduction}

Taxes, regulations, and product standards often target characteristics rather than products themselves. Soda taxes target sugar, nutrition policy targets sodium and other dietary attributes, tobacco and alcohol policy target addictive substances, and environmental standards target emissions or energy efficiency. Applied economists routinely value such characteristics using hedonic or differentiated-products demand models, and the resulting willingness-to-pay (WTP) measures are then used to interpret welfare and policy counterfactuals.\footnote{Hedonic and characteristic-based demand models are central in empirical industrial organisation and applied micro; see, e.g., \citet{smith_value_1986}, \citet{heckman_importance_1987}, \citet{berry_automobile_1995}, \citet{nevo_measuring_2001}, \citet{gibbons_valuing_2003}, \citet{bajari_estimating_2005}, and \citet{greenstone_does_2008}. For examples valuing potentially habit-forming food attributes under static hedonic preferences, see \citet{dubois_how_2020}, \citet{haeck_estimating_2022}, and \citet{le_fur_willingness_2022}.} The usual interpretation is static: a price or demand coefficient attached to a characteristic is read as a contemporaneous marginal valuation.

This static interpretation is restrictive when the characteristic itself is habit forming. If consuming an additional gram of sugar today lowers tomorrow's marginal utility through fatigue, satiation, or health-related costs, then the observed willingness to pay for that gram combines a current benefit with a negative continuation value. In that case, a static researcher understates contemporaneous WTP by treating the whole price as a current marginal value. If current consumption instead raises future marginal utility through reinforcement or craving, the static interpretation moves in the opposite direction. A static coefficient can sometimes serve as a reduced-form summary of marginal value, but it is not generally a primitive taste parameter and need not be sufficient for counterfactuals that change future preference states.

The central question of this paper is when characteristic-based valuation is coherent, and when it requires a dynamic rather than static interpretation. Existing models of habit formation are typically formulated over \emph{goods}, not characteristics. A large literature studies intertemporal dependence in the consumption of cigarettes, alcohol, food, and digital products by modelling utility as a function of past and current quantities \citep{becker_theory_1988,becker_rational_1991,gruber_tax_2004,demuynck_ill_2013,crawford_habits_2010,allcott_gentzkow_song_2022}. A central lesson of this literature is that price responses are inherently dynamic: past consumption and expected future conditions affect current demand, so temporary and permanent price changes need not be equivalent. For characteristic-based valuation, this raises an additional question. If policy targets an attribute rather than a product, the analyst must know whether \textit{attribute}-level WTP is itself a static object. Existing habit models do not answer this question because they characterise dynamic dependence in product quantities; when valuation is conducted in characteristics space, the dynamic structure must also be specified in characteristics space.

I provide a nonparametric revealed-preference (RP) framework for evaluating the static interpretation of characteristic-based valuation. The framework asks whether observed prices and choices can be rationalised by preferences over current characteristics and lagged consumption of a habit-forming subset of those characteristics. The exercise is consumer by consumer, so valuations over characteristics, discounting, and habit formation may differ freely across households. The payoff is a model-free test of when characteristic WTP has a coherent welfare interpretation, before the researcher imposes a parametric demand specification.

The first contribution is a sharp Afriat-type characterisation of dynamic hedonic rationalisability. The test asks whether there exist characteristic shadow prices and a discount factor such that two requirements hold. First, observed goods prices must admit the shadow-price representation implied by the maintained goods-to-characteristics technology. Second, observed choices must satisfy dynamic RP inequalities at those shadow prices. This extends static characteristics-based rationalisability (e.g., \citet{blow_revealed_2008}) to allow habits over characteristics, while preserving the nonparametric logic of \citet{samuelson_consumption_1948}, \citet{houthakker_revealed_1950}, \citet{afriat_construction_1967}, and \citet{browning_nonparametric_1989}.

The second contribution is to show how dynamics alter the interpretation of characteristic prices. In a static hedonic model, the price of a good purchased in positive quantity equals the value of the characteristics bundled in that good. In the dynamic model, the same price equals contemporaneous marginal valuation plus the discounted continuation value generated by habit-forming attributes. The observed characteristic price can therefore be interpreted as purely contemporaneous marginal value only under an exclusion restriction: current consumption of the characteristic must have no future utility effect. Dynamic wedges also matter for differentiated-products demand, where characteristic coefficients are often converted into WTP for policy-relevant attributes \citep{berry_automobile_1995,nevo_measuring_2001}. Even if such a coefficient is correctly identified in a static demand system, it does not by itself identify how current consumption shifts future preference states. Section~\ref{sec:valuation_wedge} makes this point concrete with a temporary attribute-tax example: static and dynamic models can agree on the immediate price response but disagree about the post-tax consequences of the policy.

Methodologically, the RP exercise plays a different role from stochastic demand estimation. Rather than absorbing deviations into unobserved product characteristics or taste shocks, it characterises the observable implications of the welfare interpretation itself. Latent characteristics can still be added to the model if desired, but then the interpretation being tested is the enlarged one, not the claim that observed policy-relevant attributes alone support static WTP. This diagnostic has substantive content: it makes static WTP as a welfare object empirically refutable, and shows whether any failure points to the price representation, the dynamic specification, or the measurement of the relevant state.

The third contribution operationalises this diagnostic by decomposing empirical failure into two distinct margins. A characteristics-based valuation exercise can fail because observed prices cannot be represented through the maintained characteristics technology; this is a structural failure of the hedonic price system. It can also fail because, conditional on such shadow prices existing, observed choices are not dynamically rationalisable; this is a behavioural failure of intertemporal choice. For applied work, this distinction matters because a failed valuation exercise has different implications depending on whether the problem is the characteristics technology itself or the static interpretation of behaviour at admissible shadow prices. For example, if soda prices do not admit a shadow price for sugar under the maintained characteristics technology, then a soda-tax calculation cannot be grounded in hedonic WTP for sugar. If such shadow prices exist, but only the dynamic model rationalises behaviour, the admissible WTP object combines current utility from sugar with the future utility effects of current sugar consumption.

I complement the binary test with diagnostics that measure how severe violations are on each margin. On the structural margin, I compute the distance from observed prices to the hedonic price manifold implied by the characteristics technology. On the behavioural margin, I adapt a Critical Cost Efficiency Index (CCEI) \citep{afriat1973system,varian_goodness--fit_1990} to measure the smallest relaxation of the dynamic RP inequalities needed for rationalisability. These diagnostics distinguish small departures from consistency from large ones and show whether empirical discipline comes primarily from the price representation or from intertemporal choice.

I apply the framework to household-level scanner data on cold-cereal purchases. Cereal is a natural setting for the exercise because products bundle measurable nutritional and descriptive characteristics, purchases are repeated over time, and certain attributes---such as sugar and sodium---are plausibly habit-forming and relevant for nutrition policy. More generally, the same framework applies to any setting with repeated purchases, observed expenditures, and measured product characteristics.

The application yields three findings. First, moving from goods to characteristics space sharply lowers raw rationalisability rates because characteristic-based models impose real restrictions on the price system. Second, the associated distance measures show that many of these structural violations are modest in dollar terms, so binary rejection alone overstates their economic severity. Third, conditional on the hedonic price representation being admissible, allowing for habits improves behavioural coherence relative to static characteristics models: it raises raw rationalisability rates for some households and moves the distribution of CCEI values closer to one more generally. These patterns show that the framework is not simply a more flexible fit criterion: it diagnoses when static characteristic-based valuation is empirically defensible and when a dynamic interpretation is additionally required.

The results clarify when prices and choices support characteristic-based valuation, when that valuation requires a dynamic interpretation, and what this implies for policy counterfactuals built from static WTP. If observed prices do not admit a characteristics-based representation, no static or dynamic hedonic interpretation is coherent under the maintained technology. If the price representation is admissible but static separability fails, then characteristic prices can no longer be read purely as contemporaneous WTP. In that case, a static WTP coefficient is not generally a primitive taste parameter: it may mix contemporaneous marginal value with omitted habit states and continuation values. The diagnostic asks whether the data support treating static WTP as a defensible welfare object for the counterfactual at hand. The continuation component is generally set identified rather than point identified, so the framework is best understood as a diagnostic of admissibility and interpretation rather than as a full structural estimator of welfare primitives.

The remainder of the paper is organised as follows. Section~2 introduces the dynamic hedonic model, derives the valuation wedge, and uses a temporary-tax example to show how static and dynamic counterfactuals can diverge, making the static interpretation an empirical restriction. Section~3 situates the result within the related RP literature by showing that static hedonic and dynamic goods-based models arise as special cases. Section~4 applies the framework to household scanner data on cereal purchases.

\vspace{-0.35em}
\medskip\noindent\textbf{Related work.} This paper first relates to the hedonic valuation and differentiated-products demand literatures following \citet{rosen_hedonic_1974} and \citet{berry_automobile_1995} (henceforth BLP).\footnote{\citeauthor{rosen_hedonic_1974}'s (\citeyear{rosen_hedonic_1974}) hedonic method has been widely applied across housing, environmental economics, health, education, and industrial organisation; see, for example, \citet{smith_value_1986}, \citet{gibbons_valuing_2003}, \citet{bajari_estimating_2005}, and \citet{greenstone_does_2008}.} Empirical work in this tradition often uses observed prices or demand coefficients to value product characteristics that enter welfare and policy counterfactuals. My point of departure is the static interpretation of those values. When some attributes are habit forming, observed prices or demand coefficients can confound current tastes with continuation values, so the economic interpretation of WTP for characteristics becomes an empirical question rather than a maintained one. This distinction is first-order when WTP estimates are used to evaluate taxes, subsidies, or standards that target the characteristics themselves.

The interpretation problem studied here is distinct from the familiar price-endogeneity problem in differentiated-products demand. In BLP-style models, instruments and supply-side restrictions are used to separate demand from marginal costs, markups, and unobserved product quality. My exercise instead conditions on the prices faced by a household and asks whether those prices and choices admit a characteristic-based welfare interpretation. The two issues are complementary: even if a demand system has recovered a static WTP coefficient for sugar, sodium, or emissions, that coefficient does not reveal whether current consumption of the characteristic shifts future marginal utility. In that sense, the paper adds an interpretation diagnostic to structural hedonic and demand approaches such as \citet{bajari2005demand}: it asks when recovered WTP objects can be read statically, and when counterfactuals require a dynamic state variable.

The paper is also related to the rational addiction literature initiated by \citet{becker_theory_1988}.\footnote{Empirical applications have largely focused on goods such as alcohol, cigarettes, caffeine, and illicit drugs; see, for example, \citet{becker_empirical_1994}, \citet{grossman_empirical_1998}, \citet{gruber_tax_2004}, and \citet{demuynck_ill_2013}.} That literature shows why current consumption may affect future marginal utility, but it is formulated at the level of goods. The distinction matters when persistence is more naturally attached to attributes such as sugar, sodium, nicotine, or caffeine than to the composite products that bundle them \citep[cf.][]{koob_drug_1997}. Relative to empirical work on addiction over nutrient profiles \citep[e.g.][]{richards_native_2006}, my approach replaces functional-form restrictions with a nonparametric RP characterisation and asks when habits over characteristics are consistent with observed prices and choices.

Methodologically, the paper contributes to two nonparametric RP literatures that have developed separately: tests of preferences over characteristics \citep{blow_revealed_2008} and RP analyses of intertemporal dependence \citep{crawford_habits_2010}, both rooted in the foundational work of \citet{afriat_construction_1967}, \citet{diewert_afriat_1973}, and \citet{varian_nonparametric_1982}. I build on these tools to answer a valuation question that neither literature addresses on its own: when do observed prices and choices support a dynamic interpretation of WTP for characteristics? Answering this question requires both a goods-to-characteristics price representation and intertemporal rationalisability at the implied shadow prices. Static hedonic RP tests and dynamic goods-based RP tests therefore arise as limiting cases, but the interaction between the two restrictions is the object of interest. This interaction yields a novel taxonomy of empirical failure and tractable diagnostics for its severity.

Finally, the framework is distinct from models of inventory behaviour and stockpiling in consumer demand \citep{hendel_sales_2006}. Storage can generate persistence in purchases under time-separable preferences, whereas habit formation in my setting operates through state dependence in the flow utility function over characteristics. The empirical framework therefore treats persistence induced by habits as conceptually different from persistence induced by intertemporal substitution, even though distinguishing the two in purchase data requires a maintained link between purchases and the underlying consumption state.

\enlargethispage{1\baselineskip}
Taken together, the paper shows when characteristic-based valuation is admissible, and when the resulting valuations require a dynamic rather than static interpretation. Moving to characteristics space imposes demanding structural restrictions on the admissible price system; allowing for dynamics changes the set of admissible behavioural patterns and the interpretation of shadow prices conditional on satisfying those restrictions.

\section{Model}
\label{sec:model}

I begin by formalising a dynamic hedonic environment in which goods map into characteristics and habits attach to a subset of those characteristics.
I observe a single consumer for $t=1,\ldots,T$, with purchases $\bm{x}_t\in\mathbb{R}^K_+$ and present-value prices $\bm{\rho}_t\in\mathbb{R}^K_+$. Observed goods prices are taken as given from the consumer's perspective, so the object of the analysis is not price determination but whether observed choices can be rationalised. Goods map into $J$ measured characteristics via a time-invariant linear technology $\bm{z}_t=\bm{A}\bm{x}_t$ following \citet{gorman_possible_1956}, where $\bm{A}$ is a $J\times K$ matrix (typically $J<K$).\footnote{I adopt a linear transformation from goods to characteristics space as it is the most widely used specification. Most results extend to a non-linear setting where $\bm{z}=\bm{F}(\bm{x})$ is increasing and strictly concave; see Supplemental Appendix~\ref{sec:nonlinear_char} for an analogue of the consistency definition. The key difference to the linear case arises in the marginal product: whereas $\partial \bm{z}/\partial \bm{x}=\bm{A}^{\prime}$ is constant in the linear model, the marginal product varies with demand under a non-linear transformation.} I treat $\bm{A}$ as known and stable over time; it captures objectively defined product attributes, while valuation is encoded in preferences. Although the model is written for one consumer, the empirical RP exercise is applied household by household, so valuations over characteristics, discounting, and habit formation are all allowed to vary freely across consumers. 

I partition characteristics as $\bm{z}_t=((\bm{z}_t^c)^{\prime},(\bm{z}_t^a)^{\prime})^{\prime}$, where $\bm{z}_t^c\in\mathbb{R}^{J_1}$ are non-habit-forming characteristics and $\bm{z}_t^a\in\mathbb{R}^{J_2}$ are habit-forming characteristics, with $J_1+J_2=J$. The analyst defines this partition. Crucially, this formulation allows intertemporal dependence to operate at the level of attributes rather than goods, so persistence in behaviour need not be attributed to non-habit-forming components bundled within a product. Although preferences are defined over characteristics, choice and budget constraints remain in goods space.

To clarify the economic content of the model, it is useful to fix ideas with a simple example. Suppose the consumer chooses between two cereal products, where each good bundles two measurable attributes: a contemporaneous ``taste'' characteristic (e.g., nuttiness) and a habit-forming ``sensory'' characteristic (e.g., salt or sugar intensity). The key modelling choice is that habits attach to the latter attribute rather than to the cereal good itself: consuming a high-intensity product today can change tomorrow's marginal value of intensity (due to sensory fatigue or craving), even if the consumer switches cereal product. In this example, the model's structural content is that observed goods prices must be representable as shadow values on attributes given $\bm{z}_t=\bm{A}\bm{x}_t$, while its behavioural content is that those shadow values must admit a concave, dynamically consistent utility representation. 

Preferences are represented by a felicity (i.e., flow utility) function $u:\mathbb{R}^{J+J_2}\to\mathbb{R}$ that depends on current characteristics and one lag of the habit-forming subset, $u(\bm{z}_t^c,\bm{z}_t^a,\bm{z}_{t-1}^a)$, as in the one-lag ``short memory habits'' specification of \citet{boyer_habit_1978,boyer_rational_1983} and \citet{becker_empirical_1994}. The multi-lag extension is straightforward and deferred to Supplemental Appendix~\ref{sec:extended_lag}. I assume quasi-linearity in an outside good $y_t$ with unit price, as is standard in empirical IO and hedonic demand models for narrow product categories \citep{berry_automobile_1995,nevo_measuring_2001}. I also maintain local non-satiation, concavity, and superdifferentiability of $u$; I impose no further sign or monotonicity restrictions on the habit-forming components. 
I refer to this environment as the \emph{habits-over-characteristics} model.

The consumer chooses $\{(\bm{x}_t,y_t)\}_{t=1}^T$ to solve
\begin{equation}
\label{max_prob}
\max_{\{(\bm{x}_t,y_t)\}_{t=1}^{T}}\sum_{t=1}^{T}\beta^{t-1}\big(u(\ztil_t)+y_t\big)\quad\text{s.t.}\quad\sum_{t=1}^{T}\bm{\rho}_t^{\prime}\bm{x}_t+\sum_{t=1}^{T}\beta^{t-1}y_t=W,\qquad \ztil_t=\Atil\,\xtil_t,
\end{equation}
where $\beta\in(0,1]$ is a discount factor and $W$ is present-value lifetime wealth. The single present-value budget should be read as the standard quasilinear lifecycle formulation: the consumer can reallocate expenditure across observed periods, while the outside good absorbs wealth effects. I use the augmented notation
\begin{equation}
\label{augmented_notation}
\ztil_t:=\big((\bm{z}_t^c)^{\prime},(\bm{z}_t^a)^{\prime},(\bm{z}_{t-1}^a)^{\prime}\big)^{\prime},\quad \xtil_t:=(\bm{x}_t^{\prime},\bm{x}_{t-1}^{\prime})^{\prime},\quad \Atil:=\begin{pmatrix}\bm{A}&\bm{0}_{J\times K}\\ \bm{0}_{J_2\times K}&\bm{A}^a\end{pmatrix},
\end{equation}
so $\ztil_t\in\mathbb{R}^{J+J_2}$, $\xtil_t\in\mathbb{R}^{2K}$, and $\Atil$ is a $(J+J_2)\times 2K$ block matrix. Throughout, I treat $\bm{x}_0$ (equivalently, the initial habit stock $\bm{z}_0^a$) as fixed and exogenous. Because the model has one lag and a finite horizon, there is no continuation term beyond $T$; equivalently, I set $\bm{\pi}_{T+1}^1 \equiv \bm{0}$ by convention. The key question is whether the observables $\{(\bm{\rho}_t,\bm{x}_t)\}_{t=1}^T$ can be rationalised by \eqref{max_prob}, and, if so, what testable restrictions this imposes on prices and choices.

\vspace{-1em}
\subsection{Consistency}

I now ask whether the observed data can be rationalised by optimising behaviour under the habits-over-characteristics model.
This notion of consistency encompasses both the existence of a hedonic shadow-price representation (a structural requirement) and the coherence of intertemporal behaviour conditional on that representation.
Here ``structural'' does not refer to price determination. Rather, it refers to whether the maintained goods-to-characteristics technology can rationalise observed prices through a corresponding system of characteristic shadow prices.

\begin{definition}[\textbf{Consistency}]
\label{consistency_definition}
The data $\{(\bm{\rho}_t,\bm{x}_t)\}_{t=1}^T$ are \emph{consistent} with the one-lag habits-over-characteristics model for given $\bm{A}$ if there exist $\beta\in(0,1]$ and a locally non-satiated, concave, superdifferentiable felicity function $u$ such that $\{\bm{x}_t\}_{t=1}^T$ solves \eqref{max_prob}.
\end{definition}

The following lemma unpacks Definition~\ref{consistency_definition} into the first-order pricing restrictions implied by the consumer's problem. The main revealed-preference characterisation, which eliminates the unknown utility function, is stated in Theorem~\ref{Afriat}.

\vspace{2ex}
\begin{mdframed}
\begin{lemma}[\textbf{First-Order Pricing Conditions}]
\label{consistency_conditions}
The data $\{\bm{\rho}_t,\bm{x}_t\}_{t=1}^T$ are consistent with the one-lag habits-over-characteristics model for a given technology matrix $\bm{A}$ if and only if there exist a locally non-satiated, concave, superdifferentiable utility function $u(\cdot)$ and a discount factor $\beta\in(0,1]$ such that for all $t\in\{1,\ldots,T\}$,
\begin{align}
\label{foc_market_prices}
\bm{\rho}_t 
\geq    
\bm{A}^{\prime} \bm{\pi}_t^{0}
+
(\bm{A}^a)^{\prime} \bm{\pi}_{t+1}^{1},
\tag{$\star$}
\end{align}
with equality for all goods $k$ such that $x_t^k>0$, where the discounted shadow prices are
\begin{align}
\label{shadow_price_0}
\bm{\pi}_t^{0} &= 
\beta^{t-1}
\begin{bmatrix}
\partial_{\bm{z}_t^c} u(\ztil_t) \\
\partial_{\bm{z}_t^a} u(\ztil_t)
\end{bmatrix}, \\
\label{shadow_price_1}
\bm{\pi}_{t}^{1} &=
\beta^{t-1}
\partial_{\bm{z}_{t-1}^a} u(\ztil_t),
\end{align}
and $\ztil_t=\Atil\xtil_t$, with $\bm{\pi}_{T+1}^1 \equiv \bm{0}$.
\end{lemma}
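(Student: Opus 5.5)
The plan is to treat \eqref{max_prob} as a concave programme in the goods choices and read off its Karush--Kuhn--Tucker conditions. Two reductions come first. Quasi-linearity lets me eliminate the outside good. Writing the budget with multiplier $\lambda$, the first-order condition for $y_t$ (unconstrained in sign, as in the standard quasilinear lifecycle reading) gives $\beta^{t-1}=\lambda\beta^{t-1}$, so $\lambda=1$. Substituting $\sum_t\beta^{t-1}y_t=W-\sum_t\bm{\rho}_t'\bm{x}_t$ then turns the problem into an unconstrained maximisation over $\bm{x}_t\ge\bm{0}$:
\[
\Phi(\bm{x}_1,\dots,\bm{x}_T)=\sum_{t=1}^T\beta^{t-1}u(\Atil\xtil_t)-\sum_{t=1}^T\bm{\rho}_t'\bm{x}_t,
\]
with $\bm{x}_0$ fixed. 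Because $u$ is concave and $\xtil_t\mapsto\Atil\xtil_t$ is linear, $\Phi$ is concave in $(\bm{x}_1,\dots,\bm{x}_T)$.

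Next I compute the superdifferential of $\Phi$. The choice $\bm{x}_t$ enters the objective in two places. It enters period $t$'s felicity through $\bm{z}_t=\bm{A}\bm{x}_t$, contributing $\bm{A}'\bm{\pi}^0_t$ with $\bm{\pi}^0_t$ as in \eqref{shadow_price_0}. It also enters period $t+1$'s felicity through the lag $\bm{z}^a_t=\bm{A}^a\bm{x}_t$, contributing $(\bm{A}^a)'\bm{\pi}^1_{t+1}$ with $\bm{\pi}^1_{t+1}$ as in \eqref{shadow_price_1}. For $t=T$ this second term is absent, which is the convention $\bm{\pi}^1_{T+1}\equiv\bm{0}$. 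By the chain rule for superdifferentials of a concave function composed with a linear map, a supergradient of $\Phi$ in $\bm{x}_t$ is $\bm{A}'\bm{\pi}^0_t+(\bm{A}^a)'\bm{\pi}^1_{t+1}-\bm{\rho}_t$. Here the partial derivatives of $u$ are read as components of a chosen supergradient at $\ztil_t$; superdifferentiability guarantees that such a supergradient exists.

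\emph{Necessity.} If $\{\bm{x}_t\}$ is optimal, the KKT conditions on the nonnegative orthant give a supergradient whose components are $\le 0$, with equality on coordinates where $x^k_t>0$. This is exactly \eqref{foc_market_prices}.

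\emph{Sufficiency.} Suppose \eqref{foc_market_prices} holds with complementary slackness. For any feasible $\{\bm{x}'_t\}$, concavity of $u$ gives
\[
\beta^{t-1}u(\ztil'_t)\le \beta^{t-1}u(\ztil_t)+\bm{\pi}^{0\prime}_t\bm{A}(\bm{x}'_t-\bm{x}_t)+\bm{\pi}^{1\prime}_t\bm{A}^a(\bm{x}'_{t-1}-\bm{x}_{t-1}).
\]
The last term vanishes at $t=1$ because $\bm{x}_0$ is fixed. Summing over $t$ and re-indexing the lag term gives
\[
\Phi(\bm{x}')-\Phi(\bm{x})\le\sum_t\big(\bm{A}'\bm{\pi}^0_t+(\bm{A}^a)'\bm{\pi}^1_{t+1}-\bm{\rho}_t\big)'(\bm{x}'_t-\bm{x}_t).
\]
Each summand is $\le 0$: on coordinates where $x^k_t>0$ the bracket is zero, and elsewhere the bracket is $\le 0$ while $x'^k_t-x^k_t=x'^k_t\ge0$. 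So $\{\bm{x}_t\}$ solves \eqref{max_prob}, with $y_t$ chosen to exhaust the budget.

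The main obstacle is the non-smooth chain rule. In the necessity direction I need a \emph{single} supergradient of each $u(\cdot)$ at $\ztil_t$ whose components jointly satisfy the KKT conditions. The supergradient of the sum $\Phi$ must decompose into per-period supergradients of $u$ pulled back through $\Atil$. I would justify this with the sum and linear-composition rules for superdifferentials (Rockafellar, Thm.~23.8--23.9). Those rules need a relative-interior qualification, which holds because $u$ is finite on all of $\mathbb{R}^{J+J_2}$. This yields the required decomposition and, with it, the shadow prices $\bm{\pi}^0_t,\bm{\pi}^1_t$.
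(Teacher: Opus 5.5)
Your proposal is correct and follows the paper's route. You substitute out the outside good via the lifetime budget, take the KKT conditions of the resulting concave programme in $\bm{x}_t\ge\bm{0}$ using the chain rule through $\Atil$, and invoke concavity for sufficiency. Your version is more careful than the paper's: you derive the unit multiplier from quasi-linearity rather than normalising it, you write out the supergradient-summation argument for sufficiency, and you justify the single-supergradient decomposition via the nonsmooth sum and linear-composition rules. These are refinements of the same argument rather than a different proof.
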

\end{mdframed}

\textit{Proof:} See Appendix~\ref{sec:proofs}. $\qed$

\vspace{-1em}
\subsection{Dynamic wedges in willingness to pay}
\label{sec:valuation_wedge}

Lemma~\ref{consistency_conditions} shows why habits can matter for characteristic-based valuation. The shadow price $\bm{\pi}_t^0$ is the discounted marginal valuation of contemporaneous characteristics, while $\bm{\pi}_t^{1}$ captures the marginal utility effect of past consumption of habit-forming characteristics. For any good purchased in positive quantity, condition $(\star)$ implies that its price equals the sum of current characteristic values and the discounted continuation value generated by habit-forming attributes.

This decomposition creates a wedge between static and dynamic WTP. In a static hedonic model, the price of a good is interpreted as the value of the characteristics bundled in that good. With habits, the same price also internalises the future utility effect of current consumption. For example, when current consumption lowers future marginal utility (i.e., $\partial_{\bm{z}_t^a} u(\ztil_{t+1}) < 0$), goods prices satisfy
\[
\rho_t^k
=
\bm{a}_k^{\prime}\bm{\pi}_t^0
+
\bm{a}_k^{a\prime}\bm{\pi}_{t+1}^{1}
<
\bm{a}_k^{\prime}\bm{\pi}_t^0,
\]
so ignoring intertemporal dependence understates contemporaneous WTP for goods with negatively reinforcing attributes. If current consumption instead raises future marginal utility through reinforcement or craving, the inequality reverses. The observed characteristic price can therefore be interpreted as purely contemporaneous marginal value only when the continuation component is zero.

The distinction matters for counterfactuals that change current consumption of habit-forming characteristics. To see this, consider the one-characteristic case with unit characteristic content, so that $a_k=a_k^a=1$. For an interior purchase, the pricing condition reduces to
\[
\rho_t^k=\pi_t^0+\pi_{t+1}^1,
\]
where $\pi_t^0$ is the discounted contemporaneous marginal value of the characteristic and $\pi_{t+1}^1$ is the discounted marginal effect of today's characteristic consumption on tomorrow's utility. A static model observes the sum $\rho_t^k$ and treats it as a current marginal valuation. The dynamic shadow price may still be the consumer's relevant private lifetime marginal value at the observed state. What the static representation lacks is the state variable that carries current characteristic consumption forward: in the dynamic model, current $\bm{z}_t^a$ enters tomorrow's utility.

\vspace{-1em}
\paragraph{Example: a temporary attribute tax.} Suppose current sugar consumption raises future marginal utility for sugar through reinforcement. A temporary sugar tax today reduces current sugar consumption and, by lowering tomorrow's habit stock, also shifts tomorrow's demand for sugar after the tax has expired. The intuition is simple: if consuming sugar today makes sugar more valuable tomorrow, then reducing sugar today lowers tomorrow's desire for sugar as well. A static model can capture the contemporaneous price response, but it cannot describe how the induced change in $\bm{z}_t^a$ changes future marginal valuations. The same nominal tax imposed later need not have the same effect, because the consumer reaches the later date with a different habit stock.

A two-period calculation makes the policy content transparent. Continue with the one-good, one-characteristic case above, so that $A=A^a=1$ and $z_t=x_t$: buying one unit of the good delivers one unit of the habit-forming characteristic. Assume wealth is high enough that the interior solution is feasible; quasi-linearity then removes income effects from the choice of $z_1$ and $z_2$. Finally, assume no initial habit stock, $\beta=1$, and concave felicity
\[
u(z_t,z_{t-1})=3z_t-\frac{1}{2}z_t^2+\frac{1}{2}z_tz_{t-1}-\frac{3}{8}z_{t-1}^2.
\]
The cross-partial $\partial^2 u/\partial z_t\partial z_{t-1}=1/2$ captures reinforcement: a higher lagged habit stock raises the current marginal value of the characteristic.
With no tax and prices $\rho_1=\rho_2=1$, the interior first-order conditions imply
\[
z_2=2+\frac{1}{2}z_1,\qquad \frac{7}{4}z_1=2+\frac{1}{2}z_2,
\]
so $z_1=2$ and $z_2=3$. Now impose a temporary tax $\tau=0.3$ only in period 1. The first-order conditions become
\[
z_2=2+\frac{1}{2}z_1,\qquad \frac{7}{4}z_1=1.7+\frac{1}{2}z_2,
\]
so $z_1=1.8$ and $z_2=2.9$. Even if a time-separable static model matched the period-1 reduction from $2$ to $1.8$, it would predict no post-tax effect in period 2 once the tax expires, because period-2 utility contains no state inherited from period 1. The dynamic model instead predicts an additional reduction from $3$ to $2.9$ because the period-1 tax changes the habit stock entering period-2 utility.

The same calculation also illustrates the price wedge in \eqref{foc_market_prices}. In period 1, the tax-inclusive price equals contemporaneous marginal value plus the continuation value of current consumption:
\[
\rho_1+\tau
=
\underbrace{(3-z_1)}_{\pi_1^0}
+
\underbrace{\left(\frac{1}{2}z_2-\frac{3}{4}z_1\right)}_{\pi_2^1}.
\]
At the taxed allocation, this gives $1.3=1.2+0.1$. A static interpretation can recover the full net marginal value $1.3$ at the observed state. What it cannot recover is the state-dependence in that value. In the calculation, the continuation component is $\pi_2^1=0.1$: it is the marginal value of changing the lagged characteristic that enters period-2 utility. The same object explains the post-tax demand effect, since the period-2 first-order condition gives $z_2=2+\frac{1}{2}z_1$. When the period-1 tax lowers $z_1$, it lowers period-2 demand even though the tax has expired.

The external-cost consequences of the policy can therefore change. Suppose sugar consumption generates an external social cost $h z_t$ that is not internalised by the consumer. A static evaluation of the temporary tax counts only the period-1 external-cost saving, $h(2-1.8)=0.2h$. The dynamic evaluation counts the same contemporaneous saving plus the post-tax saving, $h(3-2.9)=0.1h$, for a total external-cost saving of $0.3h$. Thus even when the two models agree on the immediate effect of the tax, they disagree about its total external-damage consequences. More generally, any policy calculation that trades these benefits against implementation costs or other welfare components can deliver a different ranking. Static WTP may therefore be adequate for valuing a marginal unit at the observed state, but it is not generally sufficient for evaluating policies whose effects propagate through habit stocks. The framework below provides a diagnostic for whether the data support a static interpretation, or instead point to dynamic state effects that a counterfactual should take seriously.

The same logic applies to differentiated-products demand models \citep{berry_automobile_1995,nevo_measuring_2001}. These models are often used to evaluate taxes, subsidies, or product standards that move consumption of characteristics such as sugar, sodium, emissions, or energy efficiency. If preferences are dynamically dependent over those characteristics, a static coefficient may still recover net WTP at the observed state, but it need not be sufficient for counterfactuals that move current consumption of habit-forming characteristics. This restriction is distinct from the usual equilibrium price problem in BLP-style demand estimation, where instruments and supply-side restrictions separate demand from marginal costs, markups, and unobserved product quality. Even if standard instruments and supply-side restrictions successfully identify the relevant static demand coefficient, they do not by themselves identify how current consumption shifts future preference states.

For an applied demand researcher, the practical role of the test is as a diagnostic pre-test and interpretation check. Given a maintained characteristics technology, the first step is to ask whether observed prices support characteristic shadow values. Conditional on that representation, the second step is to ask whether behaviour at those shadow values is consistent with static separability, or whether allowing current characteristics to affect future marginal utility improves rationalisability. The distance-based diagnostics developed below are useful even when the model rejects: large violations warn against treating recovered characteristic WTP as a static welfare primitive, while small violations support the static model as a useful approximation. The framework is therefore complementary to structural demand estimation: it clarifies when the WTP objects recovered by such models can be read statically, and when they require a dynamic interpretation.

In sum, the model’s empirical content is governed by two restrictions: a \emph{structural} requirement that observed goods prices admit characteristic shadow prices consistent with the maintained goods-to-characteristics technology, and a \emph{behavioural} requirement that these shadow prices be consistent with concave, dynamically stable preferences.

\vspace{-1em}
\subsection{Afriat conditions for habits-over-characteristics} 
\label{Afriat_thm}

I now state the central theoretical result of the paper. It shows that the model has two distinct sources of empirical content: a structural requirement that observed goods prices admit characteristic shadow prices given $\bm{A}$, and a behavioural requirement that these shadow prices be consistent with concave, dynamically stable preferences.

\vspace{1em}
\begin{mdframed}
\begin{theorem}[]
\label{Afriat}
The following statements are equivalent:
\begin{itemize}
\item[(A)] 
The data $\left\{\bm{\rho}_t; \bm{x}_t\right\}_{t \in \{1, \ldots, T\}}$ are consistent with the one-lag habits model for given technology $\bm{A}$.

\item[(B)]
There exist 
$T$ $J$-vector discounted shadow prices
$\left\{\bm{\pi}_t^{0}\right\}_{t \in\{1, \ldots, T\}}$, 
$T$ $J_2$-vector discounted shadow prices
$\left\{\bm{\pi}_t^{1}\right\}_{t \in\{1, \ldots, T\}}$ and a discount factor $\beta\in(0,1]$ such that,
\begin{align}
& \label{(B1)}
0 \leq \sum_{m=1}^{M} \pitil_{t_m}^{\prime}\left(\ztil_{t_{m+1}}-\ztil_{t_m}\right)
&& \forall \, M \geq 2,\ \forall \, (t_1,\ldots,t_M)\in\{1,\ldots,T\}^{M},\ t_{M+1}=t_1
 \tag{B1} \\
& \label{(B2)}
\rho^k_t
\geq 
\bm{a}_k^{\prime} \bm{\pi}_t^0 +
\bm{a}_k^{a \prime}  \bm{\pi}_{t+1}^{1} 
&& \forall \, k, t \in\{1, \ldots, T\} 
 \tag{B2} \\
 & \label{(B3)}
\rho^k_t
= 
\bm{a}_k^{\prime} \bm{\pi}_t^0 +
\bm{a}_k^{a\prime}  \bm{\pi}_{t+1}^{1} 
&& \text{if } x_t^k > 0, \, \, \forall \, k, t \in\{1, \ldots, T\}
 \tag{B3}
\end{align}
where 
$\bm{a}_k$ is the $J$-vector corresponding to the $k$-th column of $\bm{A}$,
$\bm{a}_k^a$ is the $J_2$-vector corresponding to the last $J_2$ rows of the $k$-th column of $\bm{A}$,
and 
$\pitil_t := \frac{1}{\beta^{t-1}}\left[\bm{\pi}_t^{0 \prime}, \bm{\pi}_t^{1 \prime}\right]^{\prime}$, with $\bm{\pi}_{T+1}^1 \equiv \bm{0}$.

\end{itemize}
\end{theorem}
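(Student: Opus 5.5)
The proof runs through Lemma~\ref{consistency_conditions}, which already turns consistency into the pricing conditions $(\star)$ with shadow prices given by the (discounted) supergradients of $u$. Conditions (B2)--(B3) are exactly $(\star)$ written column by column. The substance of the theorem is therefore that (B1) is necessary and sufficient for the undiscounted shadow prices $\pitil_t$ to be supergradients of some concave, locally non-satiated $u$ at the augmented bundles $\ztil_t=\Atil\xtil_t$. This is an Afriat/Rockafellar cyclical-monotonicity argument carried out in the augmented characteristics space $\mathbb{R}^{J+J_2}$.

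\emph{(A)$\Rightarrow$(B).} Take $u$ and $\beta$ from Lemma~\ref{consistency_conditions}, and define $\bm{\pi}_t^0,\bm{\pi}_t^1$ by \eqref{shadow_price_0}--\eqref{shadow_price_1}. Then (B2)--(B3) are immediate from $(\star)$. By construction, $\pitil_t=\beta^{-(t-1)}[\bm{\pi}_t^{0\prime},\bm{\pi}_t^{1\prime}]'$ is a supergradient of $u$ at $\ztil_t$. Concavity gives $u(\ztil_s)\le u(\ztil_t)+\pitil_t'(\ztil_s-\ztil_t)$ for all $s,t$. Summing this along any cycle $t_1,\dots,t_M,t_{M+1}=t_1$ makes the utility terms telescope to zero, and (B1) follows.

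\emph{(B)$\Rightarrow$(A).} Given multipliers satisfying (B1), I would first build Afriat numbers by the standard shortest-path construction: $U_t:=\min$ over chains from a fixed base point $t_1$ to $t$ of $\sum\pitil_{t_m}'(\ztil_{t_{m+1}}-\ztil_{t_m})$. The minimum is finite because (B1) rules out negative cycles, and the chains can be restricted to length at most $T$. This yields $U_s\le U_t+\pitil_t'(\ztil_s-\ztil_t)$ for all $s,t$. Next I would set $u(\bm{w}):=\min_t\{U_t+\pitil_t'(\bm{w}-\ztil_t)\}$, which is concave as a minimum of affine functions, and superdifferentiable with $\pitil_t\in\partial u(\ztil_t)$ because $u(\ztil_t)=U_t$. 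Local non-satiation holds after adding $\varepsilon\min_j w_j$-type perturbations, or directly, since the outside good $y_t$ is linear in the objective and so local non-satiation of the total utility is automatic.

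Then, with $\beta$ from (B), the discounted supergradients of this $u$ reproduce $\bm{\pi}_t^0,\bm{\pi}_t^1$. Conditions (B2)--(B3) then give $(\star)$, and Lemma~\ref{consistency_conditions} delivers consistency.

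The main obstacle is bookkeeping rather than a new idea. Lemma~\ref{consistency_conditions} requires the partial supergradients of $u$ at $\ztil_t$ to be \emph{exactly} the shadow prices used in (B2)--(B3), including the cross-period term $\bm{\pi}_{t+1}^1$, which comes from the $(t+1)$-period bundle. One must check that the $\bm{\pi}^1$ block at index $t+1$ is the one paired with $\ztil_{t+1}$ in (B1), so that a single concave $u$ supports all lags consistently. The terminal convention $\bm{\pi}_{T+1}^1\equiv\bm{0}$ must also be handled; it is harmless because no $\ztil_{T+1}$ enters the cycles.

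If some $\ztil_t$ coincide, (B1) with $M=2$ forces the relevant inequalities to hold in both directions. The construction still works, since the min-of-affine $u$ admits several supergradients at a kink.
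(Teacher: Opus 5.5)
Your proposal is correct and follows the paper's proof essentially step for step. Lemma~\ref{consistency_conditions} delivers (B2)--(B3), summing the concavity inequalities around a cycle gives (B1), and the converse builds Afriat numbers and the min-of-affine envelope with $\pitil_t\in\partial u(\ztil_t)$; your shortest-path construction of the Afriat numbers just spells out the standard result the paper cites. One small caution: drop the $\varepsilon\min_j w_j$ perturbation, because it shifts the superdifferential at each $\ztil_t$ by $\varepsilon\,\partial(\min_j w_j)(\ztil_t)$, which never contains $\bm{0}$, so it would generally destroy the exact equalities (B3); the outside-good argument is the right way to handle local non-satiation.
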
 
\end{mdframed}

\textit{Proof:} See Appendix~\ref{sec:proofs}.  $\qed$

\medskip
\label{Afriat_discussion}

Theorem \ref{Afriat} says exactly when the observed purchases can be interpreted as optimal choices over characteristics with one-period habits. Formally, the data are rationalisable if and only if there exist shadow prices and a discount factor satisfying conditions \eqref{(B1)}--\eqref{(B3)}. When such objects exist, one can construct a concave, locally non-satiated utility function over characteristics that rationalises observed choices; when they do not, no such representation is possible. Given the boundary convention stated above, the final-period pricing restriction is simply \eqref{(B2)}--\eqref{(B3)} with $\bm{\pi}_{T+1}^1=\bm{0}$. The theorem therefore delivers a sharp, nonparametric test of dynamic consistency in characteristics space.

It is useful to interpret the economic content of the three conditions. Condition \eqref{(B1)} imposes cyclical monotonicity on the shadow prices. Economically, it rules out ``cycles'' in revealed marginal valuations over augmented characteristic bundles: there should be no sequence of observed trades in characteristics space that would allow a costless improvement by returning to the starting point. Formally, cyclical monotonicity is equivalent to concavity of the instantaneous utility function \citep{rockafellar_convex_1970}. This condition is precisely the behavioural discipline of the model.

Conditions \eqref{(B2)} and \eqref{(B3)} impose the structural pricing restrictions implied by the habits-over-characteristics model. They require that observed goods prices be representable as linear combinations of contemporaneous and forward-looking shadow prices. Economically, current goods prices must internalise both current marginal utility from characteristics and the continuation value induced by habit formation. In the running cereal example, the price of a high salt or sugar cereal must reflect not only current taste for nuttiness and sensory intensity, but also how today's sensory intensity alters tomorrow's marginal utility. These equalities therefore encode the intertemporal wedge introduced by past consumption directly into the price system. 

Finally, note that the mechanism tested here differs conceptually from inventory-driven persistence \citep{hendel_sales_2006}: stockpiling generates serial correlation in purchases through intertemporal substitution and storage under time-separable preferences. Contrastingly, persistence in my framework operates through state dependence in utility over characteristics, so current consumption of habit-forming attributes carries a continuation value by shifting future marginal valuations. In purchase data, however, the two mechanisms need not be empirically separable without additional structure linking purchases to consumption.

In practice, the test is implemented as a sequence of linear feasibility problems. For a fixed value of $\beta$, the researcher constructs the augmented characteristic bundles $\tilde{\bm{z}}_t$, searches for shadow prices satisfying the pricing restrictions \eqref{(B2)}--\eqref{(B3)}, and checks whether those shadow prices satisfy the Afriat inequalities. Gridding over $\beta$ then gives the reported rationalisability result. The apparently large family of cycle restrictions in \eqref{(B1)} can be replaced by pairwise Afriat inequalities, so the implementable problem contains only a quadratic number of behavioural constraints; Supplemental Appendix~\ref{sec:LP} gives the explicit linear programme.

Interpreting the Afriat test requires care. A positive result establishes existence: there exists \emph{some} concave utility function and discount factor consistent with the data. The representation, however, is not unique. Distinct utility functions---beyond simple monotone transformations---may rationalise the same dataset. Moreover, rationalisability depends on the level of temporal and product aggregation; for example, time aggregation may smooth consumption in a way that mimics habit persistence.

A negative result is likewise not diagnostic of the precise source of failure. Rejection may reflect habit persistence extending beyond one lag, non-concavities in preferences, misspecification of the characteristics technology, or an incorrect partition of $\bm{A}$ into contemporaneous and habit-forming components (for instance, treating ``taste'' preferences like nuttiness as static when the data in fact suggest that exposure today shifts future marginal valuations). The Afriat test should therefore be interpreted as a sharp but reduced-form diagnostic of dynamic rationalisability, rather than as a structural identification device.

\vspace{-1em}
\subsection{A necessary price-span screen}
\label{sec:nec_rank_cond}

Before running the full feasibility test, one can apply a low-dimensional diagnostic that asks whether observed prices can in principle be written as characteristic shadow values. This subsection derives the corresponding necessary condition. It can be evaluated period by period: failure immediately falsifies the model, while satisfaction means only that the data pass this preliminary price screen.

The restriction is \emph{structural} in the sense that it concerns only whether the maintained goods-to-characteristics technology can in principle support \emph{any} shadow-price representation of observed prices. If the condition fails, no candidate preferences---static or dynamic---can rationalise the data because the implied shadow-price system does not exist.

Let $\bm{\rho}_t^+$ denote the $K_t^+\le K$-dimensional vector of discounted prices of goods consumed in strictly positive quantities in period $t$. Let $\bm{B}_t$ denote the $J\times K_t^+$ submatrix of $\bm{A}$ collecting the contemporaneous characteristics of those goods, and let $\bm{B}_t^a$ denote the $J_2\times K_t^+$ submatrix collecting the corresponding habit-forming characteristics. For interior dates, Theorem~\ref{Afriat} implies that
\begin{align}
\bm{\rho}_t^+
&= \bm{B}_t^{\prime}\bm{\pi}_t^0 + (\bm{B}_t^a)^{\prime}\bm{\pi}_{t+1}^1 
= 
\begin{bmatrix}
\bm{B}_t^{\prime} \,\big\rvert\, (\bm{B}_t^a)^{\prime}
\end{bmatrix}
\begin{bmatrix}
\bm{\pi}_t^0\\
\bm{\pi}_{t+1}^1
\end{bmatrix}
=: \Btil_t
\begin{bmatrix}
\bm{\pi}_t^0\\
\bm{\pi}_{t+1}^1
\end{bmatrix},
\label{rank_equation}
\end{align}
where $\Btil_t$ is the $K_t^+\times (J+J_2)$ augmented technology matrix. Because $\bm{B}_t^a$ is formed by rows of $\bm{B}_t$, the augmented matrix $\Btil_t$ has the same column space as $\bm{B}'_t$, so habit formation tilts shadow prices within the same $J$-dimensional price manifold rather than expanding it. At a genuine terminal date, the same pricing equation holds with $\bm{\pi}_{T+1}^1=\bm{0}$; the interior-date formulation is the one directly relevant for the empirical implementation, which does not treat the final observed purchase period as terminal.

Consistency therefore requires the observed price vector $\bm{\rho}_t^+$ to lie in the column space of $\Btil_t$, yielding the following necessary condition.
\vspace{1em}
\begin{mdframed}
\begin{proposition}
\label{prop:rank_condition}
If the data $\{\bm{\rho}_t,\bm{x}_t\}_{t=1}^T$ are consistent with the one-lag habits-over-characteristics model for technology $\bm{A}$, then for every $t\le T-1$,
\[
\tag{NC}
\label{NC}
\text{\rm rank}(\Btil_t \mid \bm{\rho}_t^+) = \text{\rm rank}(\Btil_t) \le \min\{K_t^+,\, J\},
\]
where $\mid$ denotes horizontal concatenation.
\end{proposition}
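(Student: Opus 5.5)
The plan is to derive (NC) directly from Theorem~\ref{Afriat}, using only the pricing equalities and then a column-space argument. First, assume the data are consistent with the one-lag habits-over-characteristics model for $\bm{A}$. By the implication (A)$\Rightarrow$(B) of Theorem~\ref{Afriat}, there exist shadow prices $\{\bm{\pi}_t^0\}$, $\{\bm{\pi}_t^1\}$ and some $\beta\in(0,1]$ satisfying \eqref{(B1)}--\eqref{(B3)}. Lemma~\ref{consistency_conditions} would serve equally well here. We discard \eqref{(B1)} and \eqref{(B2)} and keep only the equalities \eqref{(B3)}. Fix $t\le T-1$ and stack \eqref{(B3)} over the goods $k$ with $x_t^k>0$. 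This gives exactly \eqref{rank_equation}, namely $\bm{\rho}_t^+=\Btil_t\bm{w}_t$ with $\bm{w}_t:=(\bm{\pi}_t^{0\prime},\bm{\pi}_{t+1}^{1\prime})^{\prime}\in\mathbb{R}^{J+J_2}$. Restricting to $t\le T-1$ only ensures that $\bm{\pi}_{t+1}^1$ is a genuine shadow price rather than the boundary zero; at $t=T$ the same argument goes through trivially.

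Second, the equality of ranks follows from elementary linear algebra. Since $\bm{\rho}_t^+$ lies in the column space of $\Btil_t$, appending it as an extra column leaves the column space unchanged, so $\text{rank}(\Btil_t\mid\bm{\rho}_t^+)=\text{rank}(\Btil_t)$. If $K_t^+=0$ the statement is vacuous, since both matrices are empty and the rank is $0$.

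Third, we bound the rank. Trivially $\text{rank}(\Btil_t)\le K_t^+$, because $\Btil_t$ has $K_t^+$ rows. For the bound by $J$, use the fact that the columns of $(\bm{B}_t^a)^{\prime}$ are the last $J_2$ columns of $\bm{B}_t^{\prime}$, since $\bm{B}_t^a$ consists of the last $J_2$ rows of $\bm{B}_t$. Hence $\Btil_t=\bm{B}_t^{\prime}\,[\,\bm{I}_J\mid \bm{S}\,]$, where $\bm{S}=(\bm{0}_{J_2\times J_1},\bm{I}_{J_2})^{\prime}$ is a selection matrix. The column space of $\Btil_t$ therefore equals that of $\bm{B}_t^{\prime}$, which has $J$ columns, so $\text{rank}(\Btil_t)=\text{rank}(\bm{B}_t)\le\min\{K_t^+,J\}$.

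No step is a real obstacle. The only point needing care is the third step: one might naively bound the rank by $J+J_2$, the number of columns of $\Btil_t$, and it is the observation that the habit block duplicates columns of $\bm{B}_t^{\prime}$ that sharpens this to $J$.
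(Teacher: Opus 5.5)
Your proposal is correct and follows essentially the same route as the paper: stack the equalities \eqref{(B3)} of Theorem~\ref{Afriat} over purchased goods to obtain \eqref{rank_equation}, conclude that $\bm{\rho}_t^+$ lies in the column space of $\Btil_t$, and use the fact that $\bm{B}_t^a$ consists of rows of $\bm{B}_t$, so that $\Btil_t$ has the same column space as $\bm{B}_t^{\prime}$. Your explicit factorisation $\Btil_t=\bm{B}_t^{\prime}[\,\bm{I}_J\mid\bm{S}\,]$ simply makes precise the column-space remark the paper states without a formal derivation.
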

\end{mdframed}

\medskip
Violation of \eqref{NC} at any single date is sufficient to reject the model, providing a sharp, low-dimensional falsification criterion that can be evaluated period by period. The restriction coincides with the necessary spanning condition under intertemporal separability in \citet{blow_revealed_2008}: habit formation changes the \emph{level} of shadow prices but does not expand the price manifold implied by the mapping from goods to characteristics. As such, \eqref{NC} is a structural constraint driven by the geometry of $\bm{A}$ rather than by the curvature or stability of preferences. Because \eqref{NC} is only necessary, however, it does not guarantee consistency: even when \eqref{NC} holds, the inequalities for goods consumed at zero quantities may still be infeasible. Nevertheless, \eqref{NC} substantially reduces the feasible set and provides a fast diagnostic for empirical implementation.

Beyond its falsification role, the rank restriction also has limited identification content. For any $t$, \eqref{rank_equation} requires the observed price vector $\bm{\rho}_t^+$ to lie in the column space of $\Btil_t$. Because $\bm{B}_t^a$ is formed by rows of $\bm{B}_t$, the augmented matrix $\Btil_t$ has the same column space as $\bm{B}_t'$, so the feasible set of price vectors is at most $J$-dimensional. When habit formation is present ($J_2>0$), the mapping from $(\bm{\pi}_t^0,\bm{\pi}_{t+1}^1)$ to $\bm{\rho}_t^+$ is therefore not one-to-one: reallocating shadow value between contemporaneous and lag components along the habit-forming directions can leave $\bm{\rho}_t^+$ unchanged. Viewed through \eqref{rank_equation} alone, the decomposition into contemporaneous and habit components is thus generically set-identified. In the running cereal example, prices may identify the overall shadow value of ``nuttiness'' and ``sensory intensity,'' but not separately how much of the value of sensory intensity reflects current taste versus its continuation value through habits.

The full characterisation in Theorem~\ref{Afriat} can nevertheless sharpen this conclusion. Condition \eqref{(B1)} links the stacked discounted shadow prices $\tilde{\bm{\pi}}_t$ across dates through cyclical monotonicity evaluated at the observed augmented bundles $\tilde{\bm{z}}_t$, and therefore uses quantity variation as well as prices. These additional restrictions can shrink the admissible set of decompositions relative to the price-span argument embodied in \eqref{NC}. For example, the data may satisfy Theorem~\ref{Afriat} under the dynamic model while violating the static restriction $\bm{\pi}_t^1\equiv \bm{0}$, in which case the zero-habit specification is excluded from the identified set. Absent further structure, however, Theorem~\ref{Afriat} still need not point-identify, or sign, the habit component separately from the contemporaneous one: \eqref{(B1)} disciplines the joint evolution of the stacked vectors $\tilde{\bm{\pi}}_t$, but does not in general undo the non-uniqueness in the decomposition induced by \eqref{rank_equation}.


\vspace{-1em}
\subsection{Missing prices}
\label{sec:missing_prices}

I now relax the assumption that prices are observed for all market goods. In many applications, prices are recorded only for goods that are actually purchased, giving rise to a missing price problem.\footnote{Throughout, I assume that the technology matrix $\bm{A}$ is known and time-invariant. This reflects settings where characteristics can be directly observed or constructed (e.g., nutritional content, design features, emissions ratings), even when market prices are only recorded for purchased items; in practice, missing prices are far more common than missing characteristics.} Missing prices complicate RP analysis because imputing unobserved prices requires auxiliary assumptions. An alternative is to treat missing prices as unknowns and ask whether there exist values that render the data rationalisable. This existence approach should be understood as a partial-identification device: without further restrictions, one can always rationalise non-purchases by assigning prohibitively high unobserved prices, so the goal is to characterise when the observed prices alone already force a violation (or allow rationalisation) under the maintained technology. I now formalise this approach.

Let $\bm{\rho}^+_t$ denote the $K^+_t \leq K$ sub-vector of period-$t$ discounted prices for goods with strictly positive demand, and let $\bm{B}_t$ and $\bm{B}^a_t$ denote the corresponding $J \times K^+_t$ and $J_2 \times K^+_t$ sub-matrices of $\bm{A}$ and $\bm{A}^a$. Let $\bm{\rho}^0_t$, $\bm{B}^0_t$, and $\bm{B}^{a,0}_t$ denote the complementary sub-vectors and sub-matrices associated with zero demand. The full discounted price vector is $\bm{\rho}_t=(\bm{\rho}^+_t,\bm{\rho}^0_t)$. I can then state an Afriat-type characterisation for the habits-over-characteristics model with missing prices.

\vspace{1em}
\begin{mdframed}
\begin{theorem}
\label{Afriat_Missing_Prices}
The following statements are equivalent:
\begin{itemize}
\item[(A$^+$)] There exist missing prices $\{\bm{\rho}^0_t\}_{t=1}^T$ such that the completed price-quantity data are consistent with the model for technology $\bm{A}$.
\item[(B$^+$)] There exist shadow discounted prices $\{\bm{\pi}_t^{0}\}_{t=1}^T$, $\{\bm{\pi}_t^{1}\}_{t=1}^T$ and a discount factor $\beta\in(0,1]$ such that,
\begin{align}
& \label{(B1_missing)}
0 \leq \sum_{m=1}^{M} \pitil_{t_m}^{\prime}\left(\ztil_{t_{m+1}}-\ztil_{t_m}\right)
&&
 \forall \, M \geq 2,
 \forall \, (t_1,\ldots,t_M)\in\{1,\ldots,T\}^{M},\ t_{M+1}=t_1
 \tag{B1$^+$} \\
& \label{(B2_missing)}
\bm{\rho}^+_t
= 
\bm{B}_t^{\prime} \bm{\pi}_t^0 +
(\bm{B}^{a}_t)^{\prime}  \bm{\pi}_{t+1}^{1}
&&
 \forall \, t \in\{1, \ldots, T\} \, 
 \tag{B2$^+$}
\end{align}
where 
$\pitil_t := \frac{1}{\beta^{t-1}}\left[\bm{\pi}_t^{0 \prime}, \bm{\pi}_t^{1 \prime}\right]^{\prime}$, with $\bm{\pi}_{T+1}^1 \equiv \bm{0}$.
\end{itemize}
\end{theorem}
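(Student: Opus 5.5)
The plan is to reduce Theorem~\ref{Afriat_Missing_Prices} to Theorem~\ref{Afriat}, which already characterises consistency once a complete price vector is given. For any completion $\{\bm{\rho}^0_t\}$, statement (A) of Theorem~\ref{Afriat} applied to the completed data $\{(\bm{\rho}^+_t,\bm{\rho}^0_t),\bm{x}_t\}$ is equivalent to the existence of $\{\bm{\pi}^0_t,\bm{\pi}^1_t\}$ and $\beta$ satisfying \eqref{(B1)}--\eqref{(B3)}. The argument therefore comes down to showing that, once the zero-demand prices are free, the inequality \eqref{(B2)} for $x^k_t=0$ can always be met, while the equalities \eqref{(B3)} for purchased goods are exactly \eqref{(B2_missing)}.

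\textbf{(A$^+$)$\Rightarrow$(B$^+$).} Fix a completion that makes the data consistent. Theorem~\ref{Afriat} supplies shadow prices and $\beta$ satisfying \eqref{(B1)}--\eqref{(B3)}. Condition \eqref{(B1)} is literally \eqref{(B1_missing)}, because the augmented bundles $\ztil_t=\Atil\xtil_t$ depend only on quantities and on $\bm{A}$, not on prices. Stacking \eqref{(B3)} over the goods $k$ with $x^k_t>0$ gives $\bm{\rho}^+_t=\bm{B}_t'\bm{\pi}^0_t+(\bm{B}^a_t)'\bm{\pi}^1_{t+1}$, which is \eqref{(B2_missing)}. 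For $t=T$ we use the convention $\bm{\pi}^1_{T+1}\equiv\bm{0}$.

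\textbf{(B$^+$)$\Rightarrow$(A$^+$).} Given shadow prices and $\beta$ satisfying \eqref{(B1_missing)}--\eqref{(B2_missing)}, define the missing prices by
\[
\rho^{0,k}_t:=\max\{0,\ \bm{a}_k'\bm{\pi}^0_t+\bm{a}_k^{a\prime}\bm{\pi}^1_{t+1}\}
\]
for each good $k$ with $x^k_t=0$. Taking the maximum with zero guarantees $\bm{\rho}_t\in\mathbb{R}^K_+$; otherwise one could simply set equality. With these values, \eqref{(B2)} holds for unpurchased goods by construction. For purchased goods, \eqref{(B2)} and \eqref{(B3)} hold with equality by \eqref{(B2_missing)}. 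Condition \eqref{(B1)} is unchanged. Theorem~\ref{Afriat} then gives consistency of the completed data, which is (A$^+$).

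\textbf{Main obstacle.} The step I expect to need the most care is the nonnegativity and admissibility of the constructed completion. If present-value prices must lie in $\mathbb{R}^K_+$, the $\max\{0,\cdot\}$ device handles this. I would still check that Theorem~\ref{Afriat}'s sufficiency direction, via Lemma~\ref{consistency_conditions} and the Afriat-type construction of $u$, does not implicitly require equality or positivity elsewhere; for example, a strictly slack inequality at a zero purchase should be harmless for a corner solution. I would also check that the same $\beta$ and the $\bm{\pi}^1_{T+1}\equiv\bm{0}$ convention carry over between the two theorems without change.
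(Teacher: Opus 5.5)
Your proposal is correct and follows the paper's proof. Both directions reduce to Theorem~\ref{Afriat}: (A$^+$)$\Rightarrow$(B$^+$) comes from stacking (B3) over purchased goods, and (B$^+$)$\Rightarrow$(A$^+$) comes from completing the unobserved prices from the shadow prices. The only difference is in that completion: the paper sets $\bm{\rho}^0_t=(\bm{B}^0_t)'\bm{\pi}^0_t+(\bm{B}^{a,0}_t)'\bm{\pi}^1_{t+1}$ exactly, whereas your $\max\{0,\cdot\}$ choice also keeps the completed prices in $\mathbb{R}^K_+$, which the paper's construction does not guarantee since shadow prices are unsigned; this is harmless because (B2) only requires an inequality for unpurchased goods.
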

\end{mdframed}

\textit{Proof:} See Appendix~\ref{sec:proofs}.  $\qed$

\medskip
Relative to Theorem~\ref{Afriat}, the missing-price test conditions only on purchased-good prices and is therefore weaker. In empirical settings where the initial stock is unobserved and the observed sample window is not treated as the consumer's terminal horizon, the implementable counterpart is the observable interior-date analogue of Theorem~\ref{Afriat_Missing_Prices}, which evaluates the restrictions on dates for which both the lagged bundle and the one-step-ahead continuation term are observed.

\vspace{-1em}
\section{Corollaries and nesting results}

Before turning to the empirical application, I record two immediate corollaries that situate the habits-over-characteristics model within the RP literature. First, when characteristics coincide with market goods, the framework collapses to the habits-over-goods model of \citet{crawford_habits_2010}. Second, under intertemporal separability and exponential discounting, it reduces to the characteristics-based model of \citet{blow_revealed_2008}. Both results follow directly from Theorem~\ref{Afriat} once the model is specialised to the relevant limiting cases. Formal definitions and derivations for these special cases are provided in Appendix~\ref{sec:proofs}.

\vspace{-1em}
\subsection{Habits-over-goods as a special case}
\vspace{-0.5em}

When characteristics coincide with market goods---that is, when $J = K$ and the technology matrix satisfies $\bm{A} = \bm{I}_J$---the habits-over-characteristics framework reduces to a standard habits-over-goods model. In this case, the distinction between goods and characteristics disappears, and the intertemporal first-order conditions involve current and lagged consumption of habit-forming goods directly. To match \citet{crawford_habits_2010}, I maintain the additional assumption that all goods are consumed in strictly positive quantities.
Throughout this subsection, I adopt the terminal convention $\bm{\rho}_{T+1}^{a,1}\equiv \bm{0}$, so the final-period habit-good restriction has no continuation term. I also normalise the marginal utility of lifetime wealth to one, $\lambda=1$, without loss of generality.

\vspace{0.15em}
\begin{mdframed}
\begin{definition}[]
\label{consistency_goods}
The data $\left\{\bm{\rho}_t^c, \bm{\rho}_t^a; \bm{x}_t^c, \bm{x}_t^a\right\}_{t=1}^T$ are \textit{consistent} with the one-lag habits-over-goods model if there exists a locally non-satiated, superdifferentiable, and concave utility function $u(\cdot)$ and a positive constant $\beta$ such that for all $t \in \{1, \ldots, T\}$:
\vspace{-0.5em}
\begin{align}
\label{foc_market_prices_goods}
\bm{\rho}_t^c 
& =
\beta^{t-1}
\partial_{\bm{x}_t^c} u(\bar{\bm{x}}_t), \\
\bm{\rho}_t^a 
& =
\beta^{t-1}
\partial_{\bm{x}_t^a} u(\bar{\bm{x}}_t) 
+
\beta^{t}
\partial_{\bm{x}_t^a} u(\bar{\bm{x}}_{t+1}),
\vspace{-0.5em}
\end{align}
\vspace{-0.25em}
where $\bm{\rho}_t^{c}$ and $\bm{\rho}_t^{a}$ denote present-value prices of non-habit-forming and habit-forming goods, respectively.
\end{definition}
\end{mdframed}
\vspace{0.15em}

Given this definition, I obtain the following result, equivalent to that in \cite{crawford_habits_2010}.\footnote{Equivalence uses the normalisation of the marginal utility of lifetime wealth $\lambda = 1$, which can be imposed without loss of generality.}

\vspace{0.15em}
\begin{mdframed}
\begin{corollary}[]
\label{AfriatGoods}
The following statements are equivalent:
\vspace{-0.5em}
\begin{itemize}
\item[(A')] 
The data $\left\{\bm{\rho}_t^c, \bm{\rho}_t^a; \bm{x}_t^c, \bm{x}_t^a\right\}_{t=1}^T$ are consistent with the one-lag habits-over-goods model.

\item[(B')]
There exist $T$ shadow price vectors $\left\{\bm{\rho}_t^{a,0}\right\}_{t=1}^{T}$, $T$ shadow price vectors $\left\{\bm{\rho}_t^{a,1}\right\}_{t=1}^{T}$, and a positive constant $\beta$ such that:
\vspace{-0.5em}
\begin{align}
\label{(B1_goods)}
0 & \leq \sum_{m=1}^{M}\rhotil_{t_m}^{\prime}(\bar{\bm{x}}_{t_{m+1}}-\bar{\bm{x}}_{t_m})
&& \forall \, M \geq 2,\ \forall \, (t_1,\ldots,t_M)\in\{1,\ldots,T\}^{M},\ t_{M+1}=t_1, \tag{B1'} \\
\label{(B3_goods)}
\rho_t^{a, k} & = \bm{e}_k \bm{\rho}_t^{a, 0} + \bm{e}_k \bm{\rho}_{t+1}^{a, 1} && \text{if } x_t^k > 0, \quad \forall k, t \in \{1, \ldots, T\}, \tag{B3'}
\end{align}
\vspace{-0.5em}
where $\bar{\bm{x}}_t := (\bm{x}_t^{c \prime}, \bm{x}_t^{a \prime}, \bm{x}_{t-1}^{a \prime})^{\prime}$, $\bm{e}_k$ is the $k$-th standard basis vector, and
$
\rhotil_t := 
\frac{1}{\beta^{t-1}}
\begin{bmatrix}
\bm{\rho}_t^{c \prime}, \bm{\rho}_t^{a, 0 \prime}, \bm{\rho}_t^{a, 1 \prime}
\end{bmatrix}^{\prime}.
$
\end{itemize}
\end{corollary}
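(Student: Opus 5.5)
The plan is to obtain Corollary~\ref{AfriatGoods} as a specialisation of Theorem~\ref{Afriat}, with the technology matrix taken to be the identity. Set $J=K$ and $\bm{A}=\bm{I}_J$, and order goods so that $\bm{x}_t=((\bm{x}_t^c)',(\bm{x}_t^a)')'$. Then $\bm{A}^a=[\bm{0}\mid\bm{I}_{J_2}]$ and $\ztil_t=\Atil\xtil_t=(\bm{x}_t^{c\prime},\bm{x}_t^{a\prime},\bm{x}_{t-1}^{a\prime})'=\bar{\bm{x}}_t$. Consequently, the habits-over-characteristics problem \eqref{max_prob} with this $\bm{A}$ is literally the habits-over-goods problem.

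The first step checks that Definition~\ref{consistency_goods} coincides with Definition~\ref{consistency_definition} in this case. By Lemma~\ref{consistency_conditions}, consistency is equivalent to the pricing conditions $(\star)$. Because all goods are bought in strictly positive quantities, these hold with equality. Componentwise they read
\[
\bm{\rho}_t^c=\beta^{t-1}\partial_{\bm{x}_t^c}u,\qquad
\bm{\rho}_t^a=\beta^{t-1}\partial_{\bm{x}_t^a}u(\bar{\bm{x}}_t)+\beta^{t}\partial_{\bm{x}_{t}^a}u(\bar{\bm{x}}_{t+1}),
\]
which are exactly the defining equations of Definition~\ref{consistency_goods} (with $\lambda=1$). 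Thus (A') is equivalent to (A).

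The second step translates (B) into (B') via the identification
\[
\bm{\pi}_t^0=(\bm{\rho}_t^{c\prime},\bm{\rho}_t^{a,0\prime})',\qquad \bm{\pi}_t^1=\bm{\rho}_t^{a,1}.
\]
Under the identity technology, \eqref{(B2)}--\eqref{(B3)} split into two blocks. For non-habit goods they give $\rho_t^{c,k}=\pi_t^{0,k}$, so the first block of $\bm{\pi}_t^0$ is pinned down by observed prices. For habit goods they give \eqref{(B3_goods)}. Since every quantity is positive, \eqref{(B2)} is implied by \eqref{(B3)} and imposes nothing further. With $\ztil_t=\bar{\bm{x}}_t$ and $\pitil_t=\rhotil_t$, condition \eqref{(B1)} becomes \eqref{(B1_goods)}. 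Both directions of the translation are mechanical: given (B'), define the $\bm{\pi}$'s as above; given (B), read off $\bm{\rho}^{a,0}$ and $\bm{\rho}^{a,1}$ from the blocks. Theorem~\ref{Afriat} then yields (A')$\Leftrightarrow$(B').

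The main obstacle is a mismatch in the range of the discount factor. Theorem~\ref{Afriat} takes $\beta\in(0,1]$, whereas the corollary allows any positive $\beta$. I would handle this by observing that nothing in the proofs of Lemma~\ref{consistency_conditions} or Theorem~\ref{Afriat} uses $\beta\le1$. The Afriat construction, which builds a concave superdifferentiable $u$ from the cyclically monotone pairs $(\ztil_t,\pitil_t)$, works for any $\beta>0$, so the argument can be rerun with $\beta>0$. A second, smaller point also needs checking: the boundary convention $\bm{\rho}_{T+1}^{a,1}\equiv\bm{0}$ must match $\bm{\pi}_{T+1}^1\equiv\bm{0}$.
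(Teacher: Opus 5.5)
Your proposal is correct and follows the paper's route: the paper also specialises Theorem~\ref{Afriat} to $J=K$, $\bm{A}=\bm{I}_J$, after checking that Definition~\ref{consistency_definition} then reduces to Definition~\ref{consistency_goods}. The paper does not address two points you handle: extending the argument from $\beta\in(0,1]$ to any $\beta>0$, which is sound since nothing uses $\beta\le 1$, and matching the terminal conventions $\bm{\rho}_{T+1}^{a,1}\equiv\bm{0}$ and $\bm{\pi}_{T+1}^1\equiv\bm{0}$.
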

\end{mdframed}

\noindent
\textit{Proof:} See Appendix~\ref{sec:proofs}.  $\qed$
\vspace{0.75em}

\vspace{-1em}
\subsection{Intertemporally separable preferences over characteristics}
\vspace{-0.25em}

If habit formation is absent so that $\bm{z}_t = \bm{z}_t^c$ for all $t$, the model reduces to a characteristics-based framework with intertemporally separable preferences. Unlike \citet{blow_revealed_2008}, who remain agnostic about intertemporal allocation, my formulation embeds this static characteristics model within a lifecycle problem with exponential discounting.\footnote{By a lifecycle problem I mean that the consumer chooses the entire consumption path to maximise lifetime utility subject to a single present-value budget constraint under exponential discounting.} The presence of a single intertemporal budget constraint implies a single shadow value of lifetime wealth, so observed choices must satisfy both intratemporal utility maximisation over characteristics and intertemporal optimality. Consistency with this lifecycle formulation is defined as follows.
As in the rest of the paper, I normalise the marginal utility of lifetime wealth to one.

\vspace{0.25em}
\begin{mdframed}
\begin{definition}[]
\label{consistency_sep}
The data $\left\{\bm{\rho}_t; \bm{x}_t\right\}_{t=1}^T$ are \textit{consistent} with intertemporally separable preferences over characteristics and a life-cycle model for given technology $\bm{A}$ if there exists a locally non-satiated, superdifferentiable, and concave utility function $u(\cdot)$ such that, for all $t \in \{1, \ldots, T\}$,
\vspace{-0.5em}
\begin{align}
\label{foc_market_prices_sep}
\bm{\rho}_t 
\geq    
\bm{A}^{\prime} \bm{\pi}_t,
\end{align}
\vspace{-0.5em}
with equality for all $k$ such that $x_t^k > 0$, where $\bm{z}_t = \bm{A} \bm{x}_t$, $\bm{\rho}_t$ denotes the vector of present-value prices, and
$
\bm{\pi}_t := \partial_{\bm{z}_t} u(\bm{z}_t).
$
\end{definition}
\end{mdframed}

This recovers the hedonic pricing equation of \citet{gorman_possible_1956} and the first-order condition of \citet{blow_revealed_2008} when the static characteristics model is embedded in a lifecycle framework. From this, I obtain the following characterisation.

\vspace{-0.25em}
\begin{mdframed}[innertopmargin=3pt,innerbottommargin=7pt]
\small
\begin{corollary}[]
\label{AfriatSep}
The following are equivalent:
\vspace{-0.5em}
\begin{itemize}
\item[(A'')] 
The data $\left\{\bm{\rho}_t; \bm{x}_t\right\}_{t=1}^T$ are consistent with intertemporally separable preferences over characteristics and a life-cycle model.
\vspace{-0.5em}
\item[(B'')]
There exist $T$ shadow price vectors $\left\{\bm{\pi}_t\right\}_{t=1}^T$ such that:
\vspace{-0.5em}
\begin{align}
\label{(B1sep)}
0 & \leq \sum_{m=1}^{M}\bm{\pi}_{t_m}^{\prime}(\bm{z}_{t_{m+1}}-\bm{z}_{t_m})
&& \forall \, M \geq 2,\ \forall \, (t_1,\ldots,t_M)\in\{1,\ldots,T\}^{M},\ t_{M+1}=t_1, \tag{B1''} \\
\label{(B2sep)}
\rho^k_t & \geq \bm{a}_k^{\prime} \bm{\pi}_t && \forall \, k, t \in \{1, \ldots, T\}, \tag{B2''} \\
\label{(B3sep)}
\rho^k_t & = \bm{a}_k^{\prime} \bm{\pi}_t && \text{if } x_t^k > 0, \quad \forall \, k, t \in \{1, \ldots, T\}, \tag{B3''}
\end{align}
where $\bm{a}_k$ is the $k$th column of $\bm{A}$.
\end{itemize}
\end{corollary}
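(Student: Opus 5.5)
The plan is to obtain Corollary~\ref{AfriatSep} as the degenerate case of Theorem~\ref{Afriat} in which there are no habit-forming characteristics, $J_2=0$. With $J_2=0$ the augmented objects collapse: $\ztil_t=\bm{z}_t$, $\Atil=\bm{A}$, the lag block $\bm{A}^a$ is empty, and the lag shadow prices $\bm{\pi}_t^1$ disappear, so \eqref{(B2)}--\eqref{(B3)} become $\rho_t^k\ge \bm{a}_k'\bm{\pi}^0_t$ with equality when $x_t^k>0$. The felicity is then $u(\bm{z}_t)$, and Lemma~\ref{consistency_conditions} reduces to $\bm{\rho}_t\ge \bm{A}'\beta^{t-1}\partial_{\bm{z}}u(\bm{z}_t)$ with complementary slackness.

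\emph{(A$''$)$\Rightarrow$(B$''$).} Let $u$ satisfy Definition~\ref{consistency_sep} and set $\bm{\pi}_t\in\partial u(\bm{z}_t)$ equal to the supergradient that appears in \eqref{foc_market_prices_sep}. Then \eqref{(B2sep)}--\eqref{(B3sep)} hold directly. Concavity gives $u(\bm{z}_s)\le u(\bm{z}_t)+\bm{\pi}_t'(\bm{z}_s-\bm{z}_t)$ for all $s,t$. Summing these inequalities around any cycle $t_1,\ldots,t_M,t_1$ makes the utility terms telescope to zero, which yields \eqref{(B1sep)}.

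\emph{(B$''$)$\Rightarrow$(A$''$).} Given $\{\bm{\pi}_t\}$ satisfying cyclical monotonicity \eqref{(B1sep)}, use Rockafellar's theorem (equivalently, Afriat's construction) to obtain numbers $U_t$ with $U_s\le U_t+\bm{\pi}_t'(\bm{z}_s-\bm{z}_t)$. Define
\[
u(\bm{z})=\min_t\{U_t+\bm{\pi}_t'(\bm{z}-\bm{z}_t)\}.
\]
This function is concave and superdifferentiable, and it satisfies $u(\bm{z}_t)=U_t$ with $\bm{\pi}_t\in\partial u(\bm{z}_t)$. Together with \eqref{(B2sep)}--\eqref{(B3sep)}, this delivers \eqref{foc_market_prices_sep}.

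Two points remain.
\begin{itemize}
\item \emph{Local non-satiation.} If some $\bm{\pi}_t$ is not strictly positive, the constructed $u$ may be flat in some direction. I would handle this as in the proof of Theorem~\ref{Afriat}: add a vanishing strictly increasing linear perturbation outside the observed points, or note that local non-satiation in the quasilinear problem is guaranteed by the outside good $y_t$. Either route is enough.
\item \emph{Discounting.} Definition~\ref{consistency_sep} writes $\bm{\pi}_t=\partial u(\bm{z}_t)$ with present-value prices and no explicit $\beta$. I would therefore apply Theorem~\ref{Afriat} with $\beta=1$ (or absorb $\beta^{t-1}$ into present-value prices), so that $\pitil_t=\bm{\pi}_t$. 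Sufficiency of the first-order conditions for the lifecycle problem \eqref{max_prob} then follows from concavity, because the objective is concave in $\{\bm{x}_t\}$ and the constraint is linear. Quasilinearity with $\lambda=1$ pins down the single multiplier.
\end{itemize}

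The main obstacle is this bookkeeping between present-value shadow prices and the $\beta^{t-1}$ scaling in $\pitil_t$. One must check that the corollary's undiscounted \eqref{(B1sep)} is exactly \eqref{(B1)} under the normalisation adopted in Definition~\ref{consistency_sep}. If that definition is meant to allow general $\beta$, I would instead scale by $\beta^{t-1}$ inside the cyclical sums, mirroring the theorem exactly.
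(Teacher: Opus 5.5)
Your proposal is correct and follows the paper's proof, which obtains Corollary~\ref{AfriatSep} directly from Theorem~\ref{Afriat} by setting $J_1=J$ (no habit-forming characteristics) and $\beta=1$; your telescoping and Afriat-envelope arguments simply unpack that specialisation. One minor correction: the paper's proof of Theorem~\ref{Afriat} adds no perturbation to secure local non-satiation (it just appeals to Afriat's theorem), so your first route is not something you can borrow from there, although the paper is no more explicit on this point than you are.
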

\end{mdframed}

\noindent
\textit{Proof:} See Appendix~\ref{sec:proofs}.  $\qed$

Together, these corollaries clarify the paper's value added. The framework unifies existing RP results in goods space and in characteristics space within a single dynamic hedonic model, and it shows how the interpretation of prices changes once habit formation over attributes is allowed. Relative to intertemporal separability, the dynamic model introduces continuation values into the shadow-price system and delivers diagnostics---via feasibility of \eqref{(B1)}--\eqref{(B3)} and the rank condition \eqref{NC}---that separate failures of the hedonic price representation from failures of dynamically coherent preferences.

\vspace{-1em}
\section{Testing dynamic valuation in cereal purchases}

This section applies the framework to household-level cereal purchases as a diagnostic of when characteristic-based valuation is admissible in scanner data, and when it requires a dynamic rather than static interpretation. Using these data, I show that most differences in raw rationalisability across models are driven by the \emph{structural} restrictions imposed by the hedonic representation of observed prices. Conditional on those restrictions, allowing for habits improves behavioural coherence, though the gains are more visible in the severity of violations than in binary pass rates. The goal of the application is not to estimate a cereal-demand model, but to illustrate the framework in a realistic scanner setting where only purchases are observed and the mapping from purchases to underlying consumption states must be treated as a maintained approximation rather than as a directly observed object.

The empirical exercise proceeds sequentially. First, I ask a structural question: do observed within-period price vectors admit a hedonic representation? Equivalently, do they satisfy the equalities implied by the goods-to-characteristics technology (i.e., (B2$^+$) in Theorem \ref{Afriat_Missing_Prices}), so that characteristic shadow prices are well defined? These equalities act as a gatekeeper: if they fail, the behavioural test is not economically meaningful because the relevant shadow prices do not exist. Second, conditional on structural feasibility, I ask a behavioural question: do observed purchase sequences satisfy the dynamic RP inequalities (i.e., (B1$^+$)) when evaluated at the implied shadow prices? Put differently, once prices can be written as characteristic shadow values, I ask whether the household's sequence of purchases is consistent with stable dynamic preferences over those characteristics. In the data, these inequalities are applied to purchase-period bundles, so the resulting exercise should be read as a diagnostic of dynamic coherence under the maintained approximation that purchases track the relevant underlying consumption state at the chosen aggregation.

The sequential decomposition yields two findings. First, most variation in raw pass rates when moving from goods space to characteristics space reflects the hedonic price-system restrictions rather than differences in behavioural fit; empirically, however, restoring hedonic consistency typically requires only modest price adjustments. Second, conditional on satisfying those structural restrictions, allowing for habits improves behavioural coherence. The binary gains are concentrated, but the continuous violation measures show that dynamic valuation generally moves behaviour closer to rationalisability. These results should therefore be read as evidence about which restrictions bind in scanner data, and when static valuation remains a useful approximation.

Market goods are UPC-level products and characteristics are their nutritional content and a small set of descriptive indicators. The key empirical challenge is that prices are observed only for purchased items. Accordingly, I implement the missing-price test of Section~\ref{sec:missing_prices}.\footnote{\label{footnote:impute}Imputing prices for unpurchased goods is feasible (e.g., using regional price indices), but it introduces auxiliary assumptions that can blur whether failures reflect preferences or the imputation procedure.} In this setting, behavioural discipline is inherently limited by sparse price support and modest intertemporal budget variation, so the diagnostic provides a lower bound on the model's behavioural content. In environments with richer price and quantity support, the same framework may generate sharper behavioural restrictions.

These scanner-data features also affect how failures should be interpreted. Unmeasured product attributes, temporary promotions, inventory behaviour, and purchase--consumption mismatch can all generate violations of an exact RP model. I therefore use pass rates together with distance measures to evaluate whether a maintained valuation exercise is a useful approximation, rather than treating failure as literal evidence against rational choice.

\vspace{-1em}
\subsection{Data}
\label{sec:the_data}

I use household-level scanner data on cold cereal purchases from the IRI Academic Datasets' \textit{BehaviorScan} panel. The panel covers two U.S. markets (Pittsfield, MA, and Eau Claire, WI) over 2010--2011, with purchases recorded at checkout via household ID cards and Universal Product Codes (UPCs). I restrict attention to static households that participate in all 12 months of a calendar year, so recruitment and attrition occur only at year-end.

I compute household-specific time periods to accommodate heterogeneity in purchase frequencies. For household $i$, let $S_i$ denote the span from first to last purchase and $G_i$ the longest interpurchase gap (including endpoints). I set $T_i=\lfloor S_i/G_i\rfloor$ and partition $S_i$ into $T_i$ equal-length bins, which guarantees at least one purchase in each period by construction. Households with $T_i<3$ are excluded to ensure at least two observed transitions in the one-lag model. This aggregation is conservative for a one-lag specification: it ensures that the lagged bundle is observed rather than imputed, hence dynamic restrictions are evaluated on realised purchase transitions. After additionally dropping households with purchases lacking characteristics information (described below), the analysis sample contains $N=2{,}282$ households.\footnote{Appendix~\ref{app:data_appendix} documents the full sequence of sample construction and reports balance tests comparing households excluded due to missing characteristics data with the final analysis sample. Differences are modest in magnitude, suggesting limited scope for selection on observables.} Table~\ref{tab:data_summary} summarises the resulting panel structure and the scale and variety of the implied choice problems.

\begin{table}[!h]
\centering
\caption{\label{tab:data_summary}
Sample and household-level panel structure}
\fontsize{10}{12}\selectfont
\begin{tabular}[t]{>{\raggedright\arraybackslash}p{6.4cm}>{\raggedright\arraybackslash}p{3cm}>{\raggedright\arraybackslash}p{3cm}}
\toprule
Statistic & Mean & Median \\
\midrule 
\rowcolor{gray!10} Households ($N$) & \multicolumn{2}{l}{2{,}282}  \\
Months covered & \multicolumn{2}{l}{2010--2011 (24 months)} \\
\rowcolor{gray!10} Distinct products ($K$) & \multicolumn{2}{l}{801} \\
Characteristics ($J$) & \multicolumn{2}{l}{23} \\
\rowcolor{gray!10} Time periods per household ($T_i$) & 7.0 & 6.0 \\
Period length (days) & 106 & 98 \\
\rowcolor{gray!10} Units purchased per household-period & 7.9 & 6.0 \\
Total units purchased (24 months) & 55.1 & 42.0 \\
\rowcolor{gray!10} Expenditure per household-period (\$) & 22.1 & 19.5 \\
Total expenditure (24 months, \$) & 154.2 & 116.5 \\
\rowcolor{gray!10} Distinct products purchased (24 months) & 32.3 & 25.0 \\
\bottomrule
\end{tabular}
\footnotesize
\begin{flushleft}
\textit{Notes:} Statistics are computed across $N=2{,}282$ households. $T_i$ denotes the number of constructed time periods for household $i$, defined as $T_i=\lfloor S_i/G_i \rfloor$, where $S_i$ is the span between first and last purchase and $G_i$ the longest interpurchase gap.
Units are package counts; expenditures and package counts are aggregated within household-periods and totals are over 24 months.  
\end{flushleft}
\end{table}
\vspace{-2em}

Households purchase on average 7.9 units per period (median 6.0). Combined with a median period length of 98 days, this pattern is consistent with ongoing consumption rather than extreme purchase spikes. While promotions may induce some stockpiling, explaining the observed interpurchase gaps purely through inventory accumulation would require implausibly large stock build-ups relative to total quantities purchased. At the same time, within-period baskets remain narrow in UPC variety: the median household-period contains only two distinct purchased products, even though households typically buy several units. I therefore interpret each household-period as a meaningful dynamic choice observation, but not as a literal measure of contemporaneous nutrient intake: it accommodates idiosyncratic shopping frequencies while preserving the temporal structure required for testing whether aggregated purchase bundles behave in a manner consistent with the dynamic RP restrictions.\footnote{Similar coarse aggregation is common in empirical work on dynamic demand when purchase occasions are intermittent (e.g., \citet{crawford_habits_2010} uses quarterly panels in an application to tobacco). I obtain qualitatively similar cross-model comparisons under a common monthly aggregation (i.e., $T_i=24$ for all $i$), albeit in a much smaller effective sample because zero-purchase months become pervasive.}

Figure~\ref{fig:data_characteristics} shows substantial heterogeneity in shopping frequency (median $T_i=6$; median period length 98 days). Within each period I aggregate UPC-level quantities and nominal expenditures and compute unit values as expenditure-to-quantity ratios.\footnote{Expenditures are recorded in nominal dollars and converted to present-value terms for the lifecycle formulation using a monthly interest-rate series (30-Year Fixed Rate Mortgage Average, FRED \citep{fred_mortgage30us}); results are unchanged under nominal values given the short sample window.} Because unpurchased goods have no observed unit values, the test treats missing prices as unknowns and searches for completions consistent with rationalisability. Reported pass rates are therefore upper bounds: a household that fails cannot be rescued by any price imputation, while a household that passes does so under at least one completion.

\FloatBarrier
\begin{figure}[!h]
\caption{Distribution of constructed time periods and period lengths}
\label{fig:data_characteristics}
\centering
\begin{subfigure}[b]{0.49\textwidth}
\centering
\includegraphics[width=\textwidth]{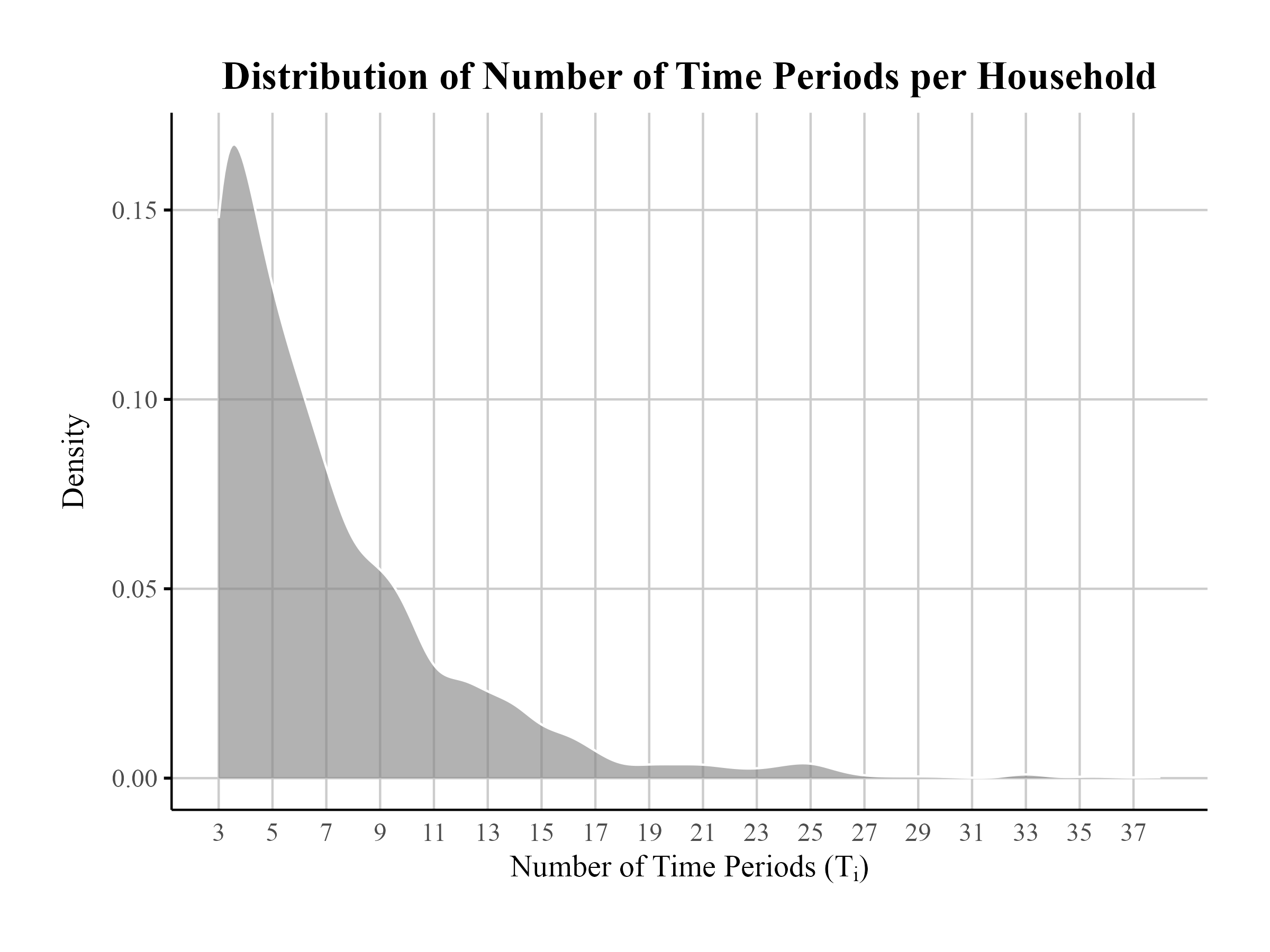}
\caption{Distribution of $T_i$ across households}
\label{fig:num_time_periods_density}
\end{subfigure}
\hfill
\begin{subfigure}[b]{0.49\textwidth}
\centering
\includegraphics[width=\textwidth]{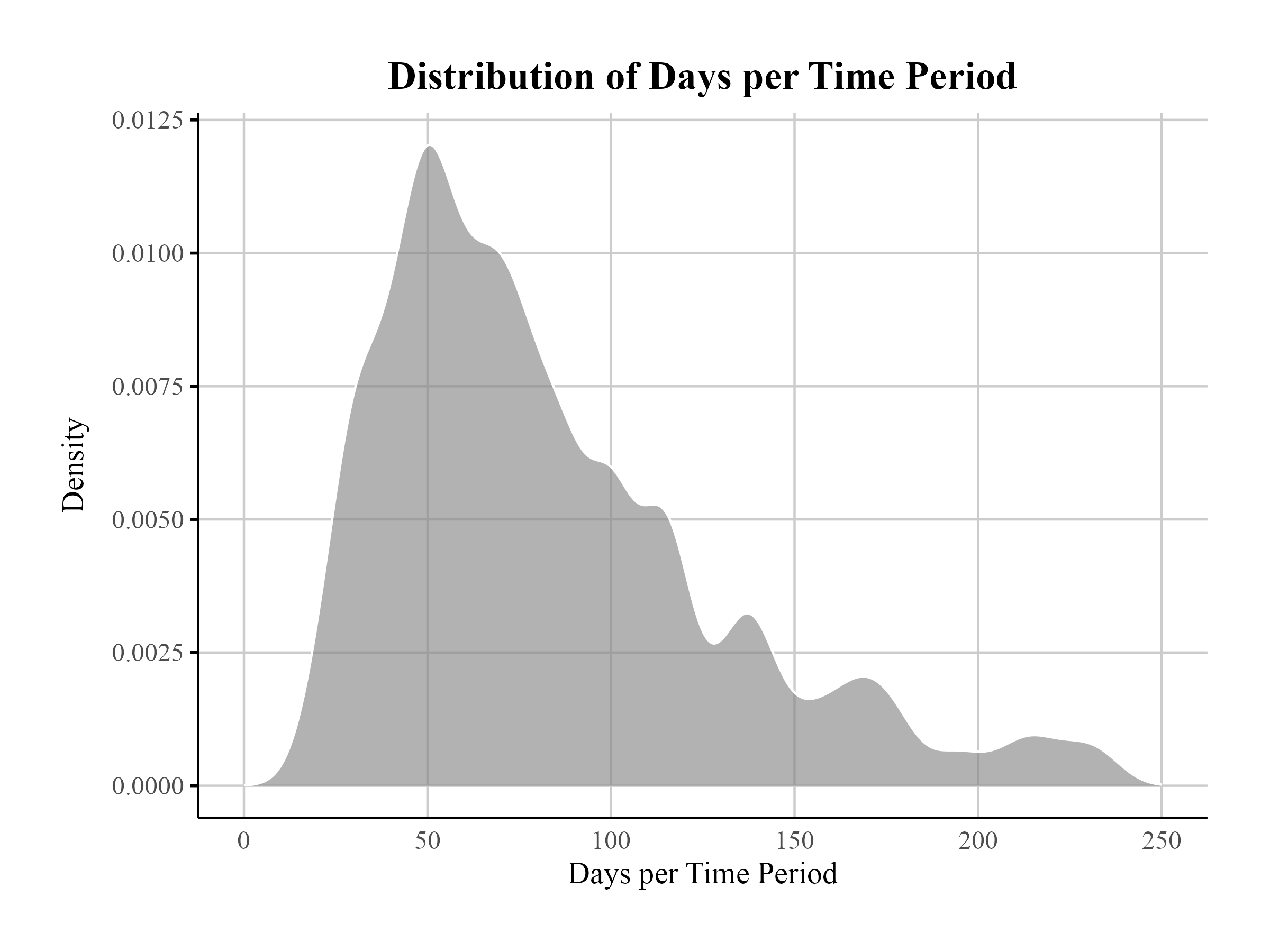}
\caption{Distribution of period lengths (days)}
\label{fig:days_per_time_period_density}
\end{subfigure}
\footnotesize
\begin{flushleft}
\textit{Notes:} Left panel shows the distribution of $T_i$ across households ($N=2{,}282$). 
Right panel shows the distribution of household-level average period length in days. 
$T_i=\lfloor S_i/G_i \rfloor$, where $S_i$ is the span between first and last purchase and $G_i$ the longest interpurchase gap.
\end{flushleft}
\end{figure}
\FloatBarrier
\vspace{-2em}

The model is defined over characteristics rather than market goods. For each UPC I construct a vector of nutritional and descriptive attributes by merging the IRI product file to the (now-defunct) NuVal shelf-labelling database and, where NuVal is missing, supplementing with data from the FatSecret Platform API, following the approach in \citet{barahona_equilibrium_2023}. I standardise all nutrients to a 100g basis so that continuous characteristics are comparable across products. Since scanner quantities are recorded in package counts, the resulting characteristics matrix should be interpreted as a linear summary of purchased product attributes. To handle IRI's placeholder codes for some private-label items, I first map System-88 pseudo-UPCs to their corresponding real UPCs prior to merging. The merged characteristics cover over 97\% of purchase-weighted observations; I drop unmatched purchase records and then drop households with any remaining unmatched purchases, so every household in the analysis sample has a complete characteristics mapping for all of its observed purchases.

I construct $J=23$ characteristics. Eight are continuous Nutrition Facts Panel measures (calories, carbohydrates, total fat, saturated fat, fibre, protein, sodium, and sugar). The remaining 15 are binary indicators capturing salient ingredients and descriptors (10 indicators) and the five most prevalent brands (Kellogg's, General Mills, Post, Quaker, and Kashi). Table~\ref{tab:binary_char} reports the binary definitions. In the baseline specification I treat sugar and sodium as habit-forming characteristics ($J_2=2$) and take the remaining characteristics as non-habit-forming, motivated by evidence that sugar and salt may activate reward pathways in ways analogous to addictive substances \citep{avena_evidence_2008, cocores_salted_2009}. Section~\ref{sec:rat_scores} reports robustness to alternative partitions. Additional descriptive evidence on purchase intensity, brand concentration, and the distributions of prices and characteristics is reported in Appendix~\ref{app:data_appendix}.

\begin{table}[!h]
\centering
\caption{Binary product characteristics used in the hedonic representation}
\label{tab:binary_char}
\fontsize{10}{12}\selectfont
\begin{tabular}[t]{>{\raggedright\arraybackslash}p{3.5cm}>{\raggedright\arraybackslash}p{12cm}}
\toprule
Binary characteristic & Construction \\
\midrule
\cellcolor{gray!10}{whole grain} & \cellcolor{gray!10}{contains `whole grain' or `wholegrain' in ingredient list} \\
organic & contains the term `organic' at least once in ingredient list \\
\cellcolor{gray!10}{oat-based} & \cellcolor{gray!10}{contains `oat' as part of the first listed ingredient} \\
granola & contains `granola' in product description \\
\cellcolor{gray!10}{health halos} & \cellcolor{gray!10}{contains `source of', `natural', `low calorie', or `low fat' in product description} \\
fruity & contains `fruit' in the product description \\
\cellcolor{gray!10}{nutty} & \cellcolor{gray!10}{contains `nut' in product description} \\
chocolatey & contains `chocolate' in product description \\
\cellcolor{gray!10}{honey flavour} & \cellcolor{gray!10}{contains `honey' in product description} \\
gluten-free & contains `gluten free' in product description \\
\cellcolor{gray!10}{Kellogg's} & \cellcolor{gray!10}{cereal brand is Kellogg's} \\
General Mills & cereal brand is General Mills \\
\cellcolor{gray!10}{Post} & \cellcolor{gray!10}{cereal brand is Post} \\
Quaker & cereal brand is Quaker \\
\cellcolor{gray!10}{Kashi} & \cellcolor{gray!10}{cereal brand is Kashi} \\
\bottomrule
\end{tabular}
\footnotesize
\begin{flushleft}
\textit{Notes:} Characteristics are defined at the UPC level using ingredient lists and product descriptions from the merged IRI--NuVal--FatSecret dataset. 
Indicators equal one if the stated textual condition is satisfied. 
Brand indicators correspond to the five most prevalent national brands in the sample. 
Continuous nutritional characteristics (calories, carbohydrates, total fat, saturated fat, fibre, protein, sodium, and sugar, standardised per 100g) are defined separately in Appendix~\ref{app:data_appendix}.
\end{flushleft}
\end{table}

 
\vspace{-3em}
\subsection{Results}
\label{sec:results}

\vspace{-0.5em}
\subsubsection{Rationalisability scores}
\label{sec:rat_scores}

Two patterns emerge in the raw rationalisability outcomes: allowing for habits increases pass rates within characteristics space, and goods-based representations exhibit substantially higher pass rates. As the sequential logic of the paper makes clear, however, these raw differences conflate structural and behavioural components. Section~\ref{sec:restrictiveness} decomposes these margins and quantifies the severity of violations.

The test is implemented at the household level, allowing full heterogeneity in pass/fail outcomes and in the shadow-price structure of rationalisable households. Because habits enter with a one-period lag, the empirical exercise uses the observable interior-date analogue of Theorem~\ref{Afriat_Missing_Prices}. Structural equalities are imposed on $t=2,\ldots,T-1$, since period $t=1$ depends on the unobserved initial stock and I do not treat the last observed purchase period as the consumer's terminal horizon. The behavioural Afriat inequalities are then evaluated on the retained dates $t=2,\ldots,T$, so the final observed period continues to discipline the shadow-price sequence. Unless otherwise specified, I impose a lifecycle model with a single intertemporal budget constraint, so the marginal utility of income is constant across periods.

Under the baseline habits-over-characteristics specification, 1{,}248 of 2{,}282 households (54.69\%) satisfy the test. Varying the set of habit-forming characteristics has essentially no effect on classification: restricting habits to sugar alone (54.60\%), sodium alone (54.65\%), or allowing all 23 characteristics to be habit-forming (54.69\%) changes the pass rate by at most 0.09 percentage points and reclassifies no more than two households. In this application, the data therefore do not isolate sugar or sodium as uniquely responsible for the dynamic improvement; rather, the main empirical action comes from the maintained hedonic representation, with habits mattering at the behavioural margin once that representation is imposed. Table~\ref{tab:main_results} reports the full set of specifications.

\begin{table}[!h]
\centering
\caption{\label{tab:main_results}
Household-level pass rates: habits-over-characteristics specifications}
\fontsize{10}{12}\selectfont
\begin{tabular}[t]{>{\raggedright\arraybackslash}p{5.5cm}>{\raggedright\arraybackslash}p{5cm}>{\raggedright\arraybackslash}p{2.5cm}>{\raggedleft\arraybackslash}p{2cm}}
\toprule
Model & Description & Lifecycle model & Pass rate \\
\midrule
\cellcolor{gray!10}{Habits-over-characteristics} & \cellcolor{gray!10}{$J_2 = 2$ (sugar and sodium)} & \cellcolor{gray!10}{Yes} & \cellcolor{gray!10}{54.69\%} \\
Habits-over-sugar & $J_2 = 1$ (sugar) & Yes & 54.60\% \\
\cellcolor{gray!10}{Habits-over-sodium} & \cellcolor{gray!10}{$J_2 = 1$ (sodium)} & \cellcolor{gray!10}{Yes} & \cellcolor{gray!10}{54.65\%} \\
Habits-over-all-characteristics & $J_2 = J = 23$ & Yes & 54.69\% \\
\bottomrule
\end{tabular}
\footnotesize
\begin{flushleft}
\textit{Notes:} A household passes if there exist characteristic shadow prices satisfying the interior-date structural equalities (B2$^{+}$) in Theorem~\ref{Afriat_Missing_Prices} on $t=2,\ldots,T-1$ and the corresponding dynamic RP conditions (B1$^{+}$) on $t=2,\ldots,T$; by Theorem~\ref{Afriat_Missing_Prices}, this is equivalent to the existence of a valid completion of missing prices in the background. 
Pass rates are computed over $N=2{,}282$ households. 
$J=23$ denotes the total number of characteristics and $J_2$ the number assumed to be habit-forming. 
All specifications impose a lifecycle model with a single intertemporal budget constraint.
\end{flushleft}
\end{table}
\vspace{-2em}

Raw pass rates differ sharply across representations (Table~\ref{tab:comparison_model_results}). Moving from characteristics to goods sharply increases rationalisability from 54.7\% to 99.6\% under dynamic preferences. This large difference is driven by the much greater dimensional flexibility of goods-based representations, which impose no cross-good price restrictions. One could mechanically raise characteristics-space pass rates by adding latent product attributes to the technology matrix. This plays a role analogous to absorbing residual variation into unobserved product characteristics, but it would change the object being tested: whether an enlarged characteristic space rationalises choices, rather than whether observed policy-relevant characteristics alone support the valuation interpretation. Holding fixed the observed characteristics, the structural restrictions bite through the geometry of the active purchased bundle rather than through broad within-period variety. Many of the apparent ``failures'' in characteristics space nevertheless correspond to economically small deviations from those hedonic restrictions. Section~\ref{sec:restrictiveness} makes this precise by separating structural and behavioural sources of empirical discipline and quantifying the magnitude of violations.

By contrast, removing habits within characteristics space reduces the pass rate from 54.7\% to 52.4\%. While this difference in levels is modest, the paired comparisons below show that the reclassification is entirely directional, with households failing under static preferences but passing once dynamics are introduced. The next subsection shows that this modest binary gain is only part of the evidence: allowing for habits also reduces the severity of behavioural violations on average.

\begin{table}[!h]
\centering
\caption{\label{tab:comparison_model_results}
Household-level pass rates across characteristics and goods representations}
\fontsize{10}{12}\selectfont
\begin{tabular}[t]{>{\raggedright\arraybackslash}p{5.5cm}>{\raggedright\arraybackslash}p{5cm}>{\raggedright\arraybackslash}p{2.5cm}>{\raggedleft\arraybackslash}p{2cm}}
\toprule
Model & Description & Lifecycle model & Pass rate \\
\midrule
\cellcolor{gray!10}{Habits-over-characteristics} & \cellcolor{gray!10}{$J_2 = 2$ (sugar and sodium)} & \cellcolor{gray!10}{Yes} & \cellcolor{gray!10}{54.69\%} \\
Characteristics (no habits) & $J_2= 0$ & Yes & 52.37\% \\
\cellcolor{gray!10}{Habits-over-all-goods} & \cellcolor{gray!10}{$K = J = J_2$, $\bm{A} =$ identity($K$)} & \cellcolor{gray!10}{Yes} & \cellcolor{gray!10}{99.56\%} \\
Goods (no habits) & $K = J = J_1$, $\bm{A} =$ identity($K$) & Yes & 92.59\% \\
\cellcolor{gray!10}{Goods (no habits) (GARP)} & \cellcolor{gray!10}{$K = J = J_1$, $\bm{A} =$ identity($K$)} & \cellcolor{gray!10}{No} & \cellcolor{gray!10}{99.69\%} \\
\bottomrule
\end{tabular}
\footnotesize
\begin{flushleft}
\textit{Notes:} A household passes if there exist characteristic shadow prices satisfying the interior-date structural equalities (B2$^{+}$) in Theorem~\ref{Afriat_Missing_Prices} on $t=2,\ldots,T-1$ and the corresponding dynamic RP conditions (B1$^{+}$) on $t=2,\ldots,T$; by Theorem~\ref{Afriat_Missing_Prices}, this is equivalent to the existence of a valid completion of missing prices in the background. 
Pass rates are computed over $N=2{,}282$ households. 
In characteristics models, $J=23$ and $J_2$ denotes the number of habit-forming characteristics. 
In goods models, $K$ denotes the number of goods and $\bm{A}=\mathrm{identity}(K)$ implies no cross-good price restrictions. 
``Lifecycle model'' indicates whether a single intertemporal budget constraint is imposed.
\end{flushleft}
\vspace{-2em}
\end{table}

The reclassification pattern is strongly directional. Removing habits from the baseline characteristics model reclassifies 53 households, all of whom fail under static preferences but pass once intertemporal dependence is introduced ($\text{p-value}<10^{-15}$). These households provide the cleanest empirical illustration of the valuation wedge in Section~\ref{sec:valuation_wedge}: the maintained characteristics technology can support a hedonic interpretation, but only if current characteristics are allowed to carry continuation values. By contrast, changing the allocation of habit-forming characteristics reclassifies at most three households and yields no statistically significant differences. Table~\ref{tab:mcnemar_results} reports the full set of paired comparisons.

Comparisons with goods-based models reveal even larger directional differences. In these cases, most switching households fail the characteristics-based test but pass the corresponding goods-based alternative. As shown in the next subsection, this pattern reflects the much greater dimensional flexibility of the goods representation, which imposes no cross-good price restrictions, rather than tighter behavioural alignment.

\begin{table}[!h]
\centering
\caption{\label{tab:mcnemar_results}Paired model comparisons (exact McNemar tests)}
\fontsize{10}{12}\selectfont
\begin{tabular}[t]{>{\raggedright\arraybackslash}p{6.4cm}>{\raggedleft\arraybackslash}p{1.6cm}>{\raggedleft\arraybackslash}p{1.6cm}>{\raggedleft\arraybackslash}p{1.6cm}>{\raggedleft\arraybackslash}p{1.6cm}>{\raggedleft\arraybackslash}p{1.6cm}}
\toprule
Comparison (baseline vs alternative) & Pass$_0$ & Pass$_1$ & $\Delta$ (pp) & Switchers & $p$-value \\
\midrule
\cellcolor{gray!10}{Baseline: Habits-over-characteristics} & \cellcolor{gray!10}{54.69} & \cellcolor{gray!10}{} & \cellcolor{gray!10}{} & \cellcolor{gray!10}{} & \cellcolor{gray!10}{} \\
Habits-over-all-characteristics & 54.69 & 54.69 & 0.00 & 0 & 1.000 \\
\cellcolor{gray!10}{Habits-over-sugar} & \cellcolor{gray!10}{54.69} & \cellcolor{gray!10}{54.60} & \cellcolor{gray!10}{0.09} & \cellcolor{gray!10}{2} & \cellcolor{gray!10}{0.500} \\
Habits-over-sodium & 54.69 & 54.65 & 0.04 & 1 & 1.000 \\
\cellcolor{gray!10}{Characteristics (no habits)} & \cellcolor{gray!10}{54.69} & \cellcolor{gray!10}{52.37} & \cellcolor{gray!10}{2.32} & \cellcolor{gray!10}{53} & \cellcolor{gray!10}{$<10^{-15}$} \\
Habits-over-all-goods & 54.69 & 99.56 & -44.87 & 1024 & $<10^{-15}$ \\
\cellcolor{gray!10}{Goods (no habits)} & \cellcolor{gray!10}{54.69} & \cellcolor{gray!10}{92.59} & \cellcolor{gray!10}{-37.91} & \cellcolor{gray!10}{1039} & \cellcolor{gray!10}{$<10^{-15}$} \\
Goods (no habits) (GARP) & 54.69 & 99.69 & -45.00 & 1031 & $<10^{-15}$ \\
\bottomrule
\end{tabular}
 
\vspace{0.2cm}
\footnotesize
\textit{Notes:} Pass$_0$ is the baseline pass rate (habits-over-characteristics with $J_2=2$), Pass$_1$ is the alternative model pass rate, and $\Delta$ (pp) reports Pass$_0$ minus Pass$_1$ in percentage points. ``Switchers'' is $n(1\rightarrow 0)+n(0\rightarrow 1)$, i.e., the number of households whose pass/fail classification differs across the paired models. $p$-values are from exact McNemar (binomial) tests based on the switchers. \raggedright
\end{table}

\vspace{-1em}

Taken together, these results indicate that while the precise \emph{allocation} of habits across characteristics plays little role in classification, introducing intertemporal dependence in characteristics space improves rationalisability for a limited but directionally important subset of households. In this application, the main empirical action comes from the maintained hedonic representation, while habits matter at the behavioural margin. The evidence is therefore not that cereal purchases reveal large sugar or sodium habits. It is that the framework separates a demanding price-representation problem from a behavioural question about whether static valuation is a useful approximation.

\enlargethispage{1\baselineskip}
\subsubsection{Structural and behavioural sources of empirical discipline}
\label{sec:restrictiveness}

If the test is to serve as an interpretation check for recovered WTP objects, rejection cannot be the end of the analysis. A binary pass/fail outcome says whether the maintained model holds exactly, but not whether it fails marginally or substantially. It also does not reveal whether the problem is the hedonic price representation or intertemporal behaviour. I therefore separate rationalisability into structural and behavioural components and quantify violations on each margin using distance measures. The construction exploits the fact that representation theorems induce measures of rationality violations through the distance to their defining restrictions (e.g., \citet{andrews2026revealed}).

In the hedonic setting, this distinction admits a natural decomposition. The two possible sources of failure map directly into the model's restrictions. Rationalisability in the habits-over-characteristics model requires joint satisfaction of two conceptually distinct restrictions. 
\emph{Structural equalities} (B2$^{+}$) link observed prices to product characteristics through the hedonic technology and act as overidentifying restrictions on the admissible shadow-price representation. This is the gatekeeper stage of the empirical analysis.
\emph{Behavioural inequalities} (B1$^{+}$) constrain intertemporal choice through shadow prices and the discount factor (Theorem~\ref{Afriat_Missing_Prices}). Separating these margins clarifies where empirical discipline originates in characteristics-based valuation and how it differs from more flexible goods-based representations.

The structural equalities require that, in each household-period, the observed price vector $\bm{\rho}_{t}^{+}$ lies in the column space of the augmented characteristics matrix $\Btil_t := \begin{bmatrix} \bm{B}_t^{\prime} \mid (\bm{B}_t^{a})^{\prime} \end{bmatrix}$. In principle this is a dimensionality-reduction restriction: prices must be representable through a lower-dimensional characteristics technology rather than arbitrary goods-specific shifters. In the scanner environment studied here, however, the active choice sets are typically narrow, so the empirical bite of the restriction depends on the realised rank of the purchased bundle. When the purchased-good matrix has full row rank, the column space of $\Btil_t$ spans all of $\mathbb{R}^{K_t^+}$ and the structural equalities are mechanically satisfied. Non-zero structural distances therefore arise precisely in those household-periods where the active bundle is rank deficient, either because the household buys too few linearly independent products or because the purchased UPCs have highly similar characteristic profiles.

This is empirically plausible in cereal data, where closely related UPCs often differ only in package size, branding, or minor formulation details and therefore carry very similar measured characteristic vectors. Even after conditioning on a rich set of nutritional and descriptive characteristics, scanner prices also reflect retailer pricing strategies, temporary promotions, and mark-ups driven by market power or shelf placement that do not correspond to attributes households consume. When the active bundle is low rank, those retailer-specific price components cannot be absorbed by the hedonic system and therefore appear as structural residuals.

To quantify the severity of these violations, I compute for each household-period the Euclidean distance
\[
d_t = \big\| \bm{\rho}_{t}^{+} - \Btil_t \Btil_t^{+} \bm{\rho}_{t}^{+} \big\|,
\]
where $\Btil_t^{+}$ denotes the Moore--Penrose pseudoinverse and $\Btil_t \Btil_t^{+}$ is the orthogonal projector onto the equality manifold implied by the hedonic representation. Equivalently, observed prices admit the orthogonal decomposition
\[
\bm{\rho}_t^{+} \;=\; \Btil_t \hat{\bm{\pi}}_t \;+\; \bm{r}_t,
\]
where $\hat{\bm{\pi}}_t$ minimises $\|\bm{\rho}_t^{+} - \Btil_t \bm{\pi}\|$ and $\bm{r}_t$ is orthogonal to the column space of $\Btil_t$. The distance $d_t = \|\bm{r}_t\|$, measured in dollars, therefore captures the minimal joint price adjustment required for a hedonic price representation to exist. For example, $d_t = 1.0$ means that the minimum adjustment vector to prices has Euclidean norm 1 dollar; equivalently, the sum of squared price adjustments across the goods purchased that period is 1.

One response to violations of the hedonic equalities is to augment the technology with latent characteristics, as in \citet{blow_revealed_2008}, thereby expanding the price manifold until the equalities hold. In the limit, sufficiently rich product-specific latent attributes collapse the characteristics model toward a goods-space representation, mechanically reducing structural violations. I do not pursue this route here. My objective is not to restore rationalisability by construction, but to ask how much discipline is imposed by the observed, policy-relevant characteristics and to separate structural misspecification from behavioural inconsistency.

Figure~\ref{fig:distance_hist} plots the distribution of mean distances across households for both characteristics- and goods-based specifications. The characteristics models exhibit a wide, right-skewed distribution centred above zero. In this application, those distances should be read as evidence that many active bundles fail to span their own purchased-good price space once prices are projected through the maintained characteristics technology. Even so, the implied unit-price violations are often modest in magnitude: most household-periods require a minimum joint price adjustment with Euclidean norm below 1 dollar to restore hedonic consistency. As a rough benchmark, average household cereal expenditure is \$22.1 per period. By contrast, goods-based models impose no cross-good price restrictions and therefore satisfy the structural equalities mechanically, yielding $d_t = 0$.

\vspace{-0.5em}
\begin{figure}[H]
\centering
\caption{Mean distance-to-manifold}
\vspace{-1em}
\label{fig:distance_hist}
\includegraphics[width=0.85\textwidth]{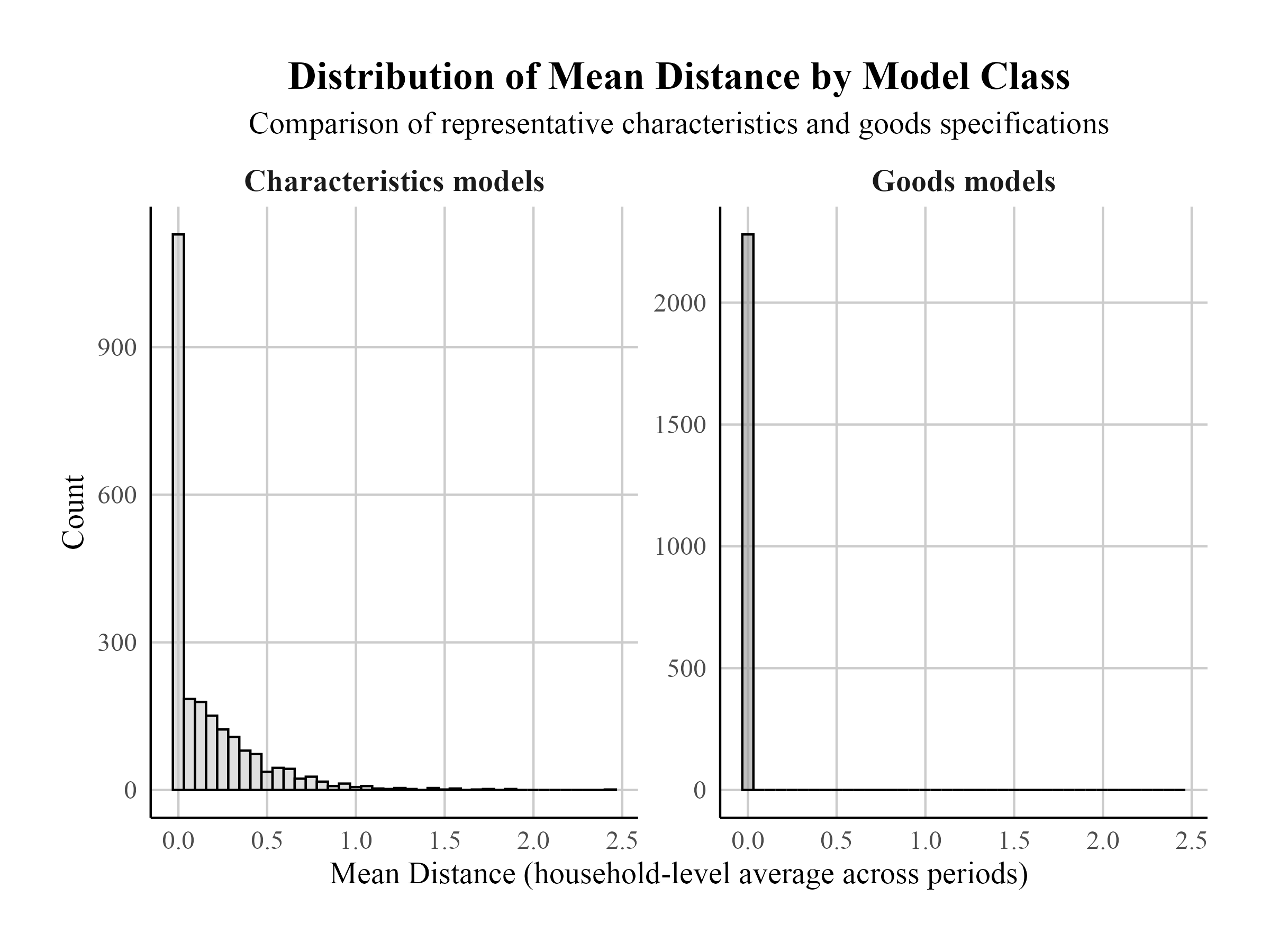}
\footnotesize
\begin{flushleft}
\textit{Notes:} Histograms of the household-level mean of the structural distance measure across periods. 
Distances are measured in dollars and represent the minimal joint price adjustment required for the hedonic price equalities to hold. 
Left panel reports the characteristics specification; right panel the goods specification.
\end{flushleft}
\end{figure}

\vspace{-2em}

An important implication of this geometry is that all characteristics-based specifications impose identical structural restrictions. Habit-forming characteristics enter $\Btil_t$ only as duplicated rows of $\bm{B}_t$ and therefore neither increase its rank nor enlarge the feasible price set. Structural restrictiveness is thus governed entirely by the hedonic representation itself, not by the allocation of habits across characteristics. In the present data, this means that structural rejection is best understood as a statement about the linear algebra of narrow purchased bundles---their rank and characteristic collinearity---rather than about the dynamic specification per se. Moreover, these structural restrictions operate at the level of prices and characteristics and are conceptually distinct from inventory dynamics: even if households smooth consumption through inventories, prices must still admit a hedonic representation for shadow prices to be well defined.

To measure behavioural restrictiveness, I adapt the Critical Cost Efficiency Index (CCEI) following \citet{afriat1973system} and \citet{varian_goodness--fit_1990}. The CCEI measures the smallest proportional relaxation of revealed affordability required for the behavioural inequalities in (B1$^{+}$) to admit a solution.\footnote{The CCEI is interpreted here strictly as a measure of RP slack, following its original cost-efficiency interpretation in \cite{afriat1973system}. As emphasized by \citet{echenique_meaning_2022}, it should \textit{not} be interpreted as a welfare loss or a measure of foregone surplus.} Economically, a CCEI of $\eta$ indicates that the revealed-affordability comparisons in the lifecycle RP test need only be relaxed by at most $(1-\eta)\times 100\%$ for the observed intertemporal choice path to become rationalisable. Values close to one therefore indicate that behaviour is nearly dynamically rational, while lower values signal more severe violations. The index is defined only for households whose prices lie exactly on the equality manifold, because characteristic shadow prices---and hence the behavioural inequalities themselves---are well defined only when the hedonic equalities hold. In this sense, the CCEI plays an analogous role to a goodness-of-fit statistic for intertemporal choice: it quantifies how much slack must be introduced before the model can rationalise observed behaviour, holding the price system fixed.

Figure~\ref{fig:CCEI_hist} shows that behavioural violations are generally modest, but systematically smaller when habits are allowed. All characteristics-based specifications exhibit substantial mass near unity, indicating that once the hedonic structure is satisfied, only limited perturbations are needed to rationalise behaviour. However, the static characteristics model without habits displays a thicker lower tail, reflecting greater intertemporal inconsistency even when structural feasibility holds. A parallel pattern arises in the goods domain: the habits-over-goods specification yields CCEIs tightly concentrated near one, while the static goods model exhibits a wider distribution with more mass below 0.95. These patterns indicate that allowing for habits reduces measured behavioural violations by absorbing intertemporal reversals that the static model cannot rationalise.

This behavioural improvement matters for interpretation. For households that satisfy the dynamic characteristics model but fail the static counterpart, the data reject the restriction that characteristic shadow prices can be read purely as contemporaneous marginal valuations. In those cases, a valuation exercise that treats sugar or sodium prices as static WTP imposes an additional timing restriction not supported by the observed purchase path. The empirical test therefore identifies when the dynamic wedge characterised in Section~\ref{sec:valuation_wedge} is supported.

The above distance measures show where and how severely the model fails, but they do not by themselves show whether the restrictions are demanding in this scanner environment. This is the fit-versus-restrictiveness concern emphasised by \citet{selten_properties_1991}: a model may rationalise many datasets because it captures structure, or because nearby alternatives would also pass. I therefore complement the decomposition with a simulation benchmark in the spirit of \citet{fudenberg_how_2023}, which evaluates how unusually well the model fits the observed data relative to nearby feasible perturbations of the scanner environment. The benchmark preserves each household's zero pattern and total expenditure while allowing prices and quantities to vary locally, thereby inducing a comparison distribution over structural and behavioural discrepancy measures. Full details are provided in Supplemental Appendix~\ref{sec:power_analysis_appendix}.

\vspace{-0.5em}
\begin{figure}[H] 
\centering 
\caption{Critical Cost Efficiency Index (CCEI)}
\label{fig:CCEI_hist} 
\includegraphics[width=\textwidth]{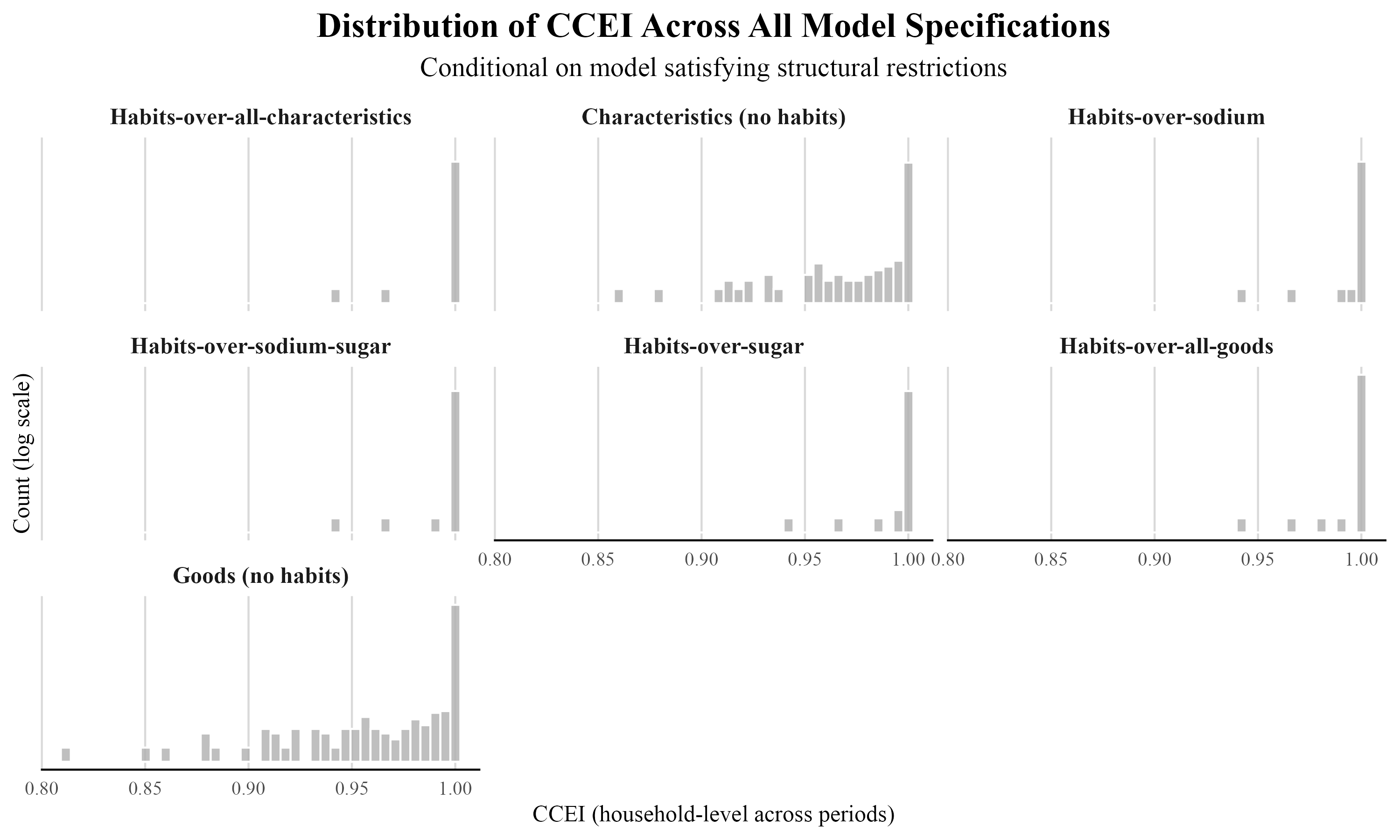} 
\footnotesize
\begin{flushleft}
\textit{Notes:} Histograms of household-level CCEI values across model specifications. 
The CCEI is the smallest proportional relaxation of the behavioural RP inequalities required for feasibility; values closer to one indicate smaller violations. 
Computed only for households satisfying the structural price equalities. 
Vertical axis is on a log scale.
\end{flushleft}
\end{figure} 

\vspace{-1.5em}

On the structural margin, the benchmark confirms that the hedonic equalities impose substantial empirical discipline. Observed scanner prices are, on average, closer to the equality manifold than locally perturbed price systems, yet are rarely extreme outliers relative to the induced comparison distribution. This indicates that the hedonic restrictions are demanding and not trivially satisfied. By contrast, goods-based models impose no cross-good price restrictions and therefore cannot fail on the structural margin, accounting for their much higher raw pass rates.

On the behavioural margin, CCEI values are extremely close to one in both the observed and locally perturbed data, leaving little scope for behaviour to appear unusually efficient in a quantile sense. This reflects the empirical environment---sparse active choice sets and limited intertemporal budget variation---rather than a lack of behavioural content in the model.

Taken together, the price geometry, behavioural slack measures, and simulation benchmark clarify the interpretation of raw pass rates. Structural restrictions embedded in the hedonic representation largely determine differences across goods and characteristics models, but in this application they do so through the rank properties of the active purchased bundles rather than through a broad market-level dimension count alone. Conditional on those restrictions, allowing for habit formation systematically improves intertemporal coherence relative to static preferences. Goods-based models achieve high pass rates primarily because they impose little structural content, not because they deliver a tighter account of behaviour.

\vspace{-0.5em}
\subsubsection{Discount factors}

The RP conditions are evaluated conditional on the discount factor $\beta$. Supplemental Appendix~\ref{app:beta} reports the fraction of rationalisable households consistent with each value on a fine grid $\beta \in [0.95,1]$. Acceptance probabilities are uniformly high across the grid under both habits-over-characteristics and habits-over-goods specifications, with no economically meaningful monotonic pattern.

These results indicate that, conditional on the hedonic price restrictions, the behavioural inequalities impose only weak discipline on intertemporal discounting in this environment. The identification sets for $\beta$ are wide, cautioning against interpreting discount factors recovered from nonparametric dynamic RP tests as tightly identified structural parameters in scanner data.

\vspace{-0.5em}
\subsubsection{Predictors of rationalisability}

Pass/fail outcomes are primarily associated with the scale and complexity of the observed choice problem rather than with demographics per se. Table~\ref{tab:probit_marginals_combined} reports average marginal effects from probit regressions of the pass indicator on household characteristics. Column (1) includes demographics only; Column (2) adds measures of purchasing intensity and variety.

In the demographic-only specification, households with children are 10.6 percentage points less likely to pass. Once purchasing controls are added, this effect attenuates and becomes statistically insignificant. By contrast, scale measures remain economically large and significant: an interquartile increase in two-year cereal expenditure is associated with roughly a 25 percentage point lower probability of passing, while a comparable increase in product variety reduces the probability by about 8 percentage points.

Taken together, these results suggest that household composition matters primarily through the scale and complexity of the observed choice problem. Larger and more diverse purchasing patterns generate a greater number of structural and behavioural constraints, increasing the likelihood that at least one is violated. For instance, when a household purchases products with sharply different nutritional profiles across periods, the implied cross-attribute shadow-price system must rationalise a wider range of price--quantity trade-offs than in households that repeatedly buy a narrow set of similar cereals. Demographics per se have limited explanatory power once this effective dimensionality is accounted for. The demographics-only specification has almost no explanatory power (McFadden pseudo-$R^{2}=0.009$), whereas the specification with purchasing controls reaches a more substantial 0.176.

\FloatBarrier
\begin{table}[H]\centering 
\caption{Determinants of rationalisability: probit average marginal effects}
  \label{tab:probit_marginals_combined} 
  \vspace{-0.2em}
\begin{tabular}{@{\extracolsep{5pt}}lD{.}{.}{-3}D{.}{.}{-3}} 
\\[-1.8ex]\hline 
\hline \\[-1.8ex] 
 & \multicolumn{2}{c}{\textit{Dependent variable: Pr(Pass)}} \\ 
\cline{2-3}
\\[-1.8ex] 
 & \multicolumn{1}{c}{(1) demographics only} 
 & \multicolumn{1}{c}{(2) w/ purchasing controls} \\ 
\hline \\[-1.8ex] 

young child 
& -0.106 
& -0.081 \\ 
& (0.084) 
& (0.079) \\[0.5ex]

children 
& -0.106^{**} 
& -0.012 \\ 
& (0.033) 
& (0.030) \\[0.5ex]

age group HH head 
& 0.004 
& -0.006 \\ 
& (0.012) 
& (0.011) \\[0.5ex]

education HH head 
& -0.00004 
& 0.001 \\ 
& (0.008) 
& (0.007) \\[0.5ex]

Pittsfield 
& 0.005 
& -0.001 \\ 
& (0.021) 
& (0.022) \\[0.5ex]

high income HH$^\dagger$ 
& -0.036 
& 0.008 \\ 
& (0.025) 
& (0.023) \\[0.5ex]

total units purchased 
&  \multicolumn{1}{c}{} 
& 0.00002 \\ 
&  
& (0.001) \\[0.5ex]

total expenditure 
&  
& -0.002^{***} \\ 
&  
& (0.0003) \\[0.5ex]

total unique products 
&  
& -0.003^{**} \\ 
&  
& (0.001) \\[0.5ex]

\hline \\[-1.8ex] 
Pseudo-$R^2$ (McFadden) 
& 0.009 
& 0.176 \\ 
Observations 
& \multicolumn{1}{c}{2,187} 
& \multicolumn{1}{c}{2,187} \\ 
\hline 
\hline \\[-1.8ex] 
\multicolumn{3}{l}{\footnotesize $^{*}$p$<$0.1; $^{**}$p$<$0.05; $^{***}$p$<$0.01.} \\
\multicolumn{3}{l}{\footnotesize $^\dagger$High-income indicator equals one for pre-tax income $\ge 75$k.} \\
\end{tabular} 
\vspace{-0.3em}
\footnotesize
\begin{flushleft}
\textit{Notes:} Entries report average marginal effects from probit models of an indicator for passing the habits-over-characteristics test. Column (1) includes demographic controls only; column (2) additionally includes purchasing intensity measures. Standard errors in parentheses are based on the delta method. 95 observations omitted due to missing demographic data.
\end{flushleft}
\end{table}
\FloatBarrier

\vspace{-2em}

\vspace{-1.7em}
\section{Conclusion}
\vspace{-0.5em}

This paper provides a nonparametric foundation for valuing characteristics when those characteristics may be habit forming. Policies often target attributes such as sugar, sodium, alcohol, caffeine, or emissions, and applied work often interprets characteristic prices or demand coefficients as static WTP. I show that a purely contemporaneous interpretation is valid only under a restriction: current consumption of the characteristic must carry no continuation value.

The main result is an Afriat-type characterisation of dynamic hedonic valuation. Observed prices and choices admit a dynamic characteristics-based interpretation if and only if there exist characteristic shadow prices and a discount factor satisfying structural price equalities and dynamic behavioural inequalities. The structural equalities require observed goods prices to be representable through the maintained goods-to-characteristics technology. The behavioural inequalities require choices to be dynamically rationalisable at the implied shadow prices.

\enlargethispage{0.75\baselineskip}
The framework clarifies why dimensionality reduction and dynamics play different roles. Moving from goods to characteristics yields parsimony but imposes geometric discipline on the admissible representation of prices. Accommodating dynamic preferences does not relax that discipline. Rather, it changes the behavioural restrictions imposed once characteristic shadow prices exist: choices that violate static separability may still be rationalisable when current characteristics are allowed to affect future marginal valuations. This distinction matters for policy because static WTP may be informative at a fixed state, while static and dynamic counterfactuals can diverge when policy effects propagate through habit stocks.

The cereal application illustrates this logic in a purchase-based scanner setting. Goods-based benchmarks pass at very high rates, while characteristics-based models fail more often because observed prices frequently violate the hedonic spanning restriction. The associated distance measures show, however, that many structural violations are economically modest. Conditional on structural admissibility, behaviour is close to dynamically rational, and allowing for habit formation improves behavioural coherence relative to static characteristics models for a subset of households.

The framework is best understood as a diagnostic of admissibility and interpretation rather than as a point estimator of welfare primitives. Rationalisability does not identify a unique dynamic preference representation, and the continuation component of characteristic prices is generally set identified without further structure. This limitation is also informative: it shows exactly which additional assumptions are required before static WTP, dynamic WTP, or policy counterfactuals can be given a unique welfare interpretation. Empirical failure should be read in the same spirit: it points to an incomplete maintained representation rather than literal irrationality.

Several directions for future work follow directly. Richer data with denser price support, more frequent observation, and greater intertemporal budget variation would sharpen both structural and behavioural restrictions. On the modelling side, extensions to stochastic choice, richer unobserved heterogeneity, and environments in which utility depends on current goods while habits attach to lagged characteristics would broaden the scope of the test. The broader message is that WTP for product characteristics, whether recovered from prices or demand systems, is not automatically a sufficient statistic for policy counterfactuals: this paper provides a disciplined framework for testing whether characteristic shadow values exist, whether they can be read statically, and when their use in counterfactuals should account for continuation effects.

\newpage
\renewcommand{\appendixtocname}{Appendix}  
\renewcommand{\appendixpagename}{Appendix}  
\begin{appendices}
\addtocontents{toc}{\protect\setcounter{tocdepth}{-1}} 

\vspace{-1em}

\section{Proofs}
\label{sec:proofs}

\noindent
\textbf{Proof of Lemma \ref{consistency_conditions}.}
\label{sec:consumer_maximisation_FOC}

Substituting the technology constraint $\ztil_t = \Atil \xtil_t$ into the consumer's lifetime problem in \eqref{max_prob} and using the lifetime budget constraint to substitute out the outside good, the relevant Lagrangian can be written as
\begin{align}
\label{lagrangian}
\mathcal{L}(\{\bm{x}_t\})
= \sum_{t=1}^{T} \beta^{t-1}u(\Atil\xtil_t)
- \left\{\sum_{t=1}^{T} \bm{\rho}_t' \bm{x}_t - W\right\},
\end{align}
where $\bm{x}_0$ is treated as fixed, $W$ is interpreted as lifetime wealth net of outside-good consumption, and I normalise the multiplier on lifetime wealth to one.

Using the chain rule \citep{felippa_introduction_2004}, the relevant derivatives are $\partial u(\Atil\xtil_t)/\partial \bm{x}_t = [\bm{I}_{K \times K} \mid \bm{0}_{K \times K}]\Atil^{\prime}\partial u(\ztil_t)$ and $\partial u(\Atil\xtil_{t+1})/\partial \bm{x}_t = [\bm{0}_{K \times K} \mid \bm{I}_{K \times K}]\Atil^{\prime}\partial u(\ztil_{t+1})$.\footnote{Since $u(\cdot)$ is assumed to be concave but not necessarily differentiable, each $\partial u(\ztil_t)$ denotes a supergradient. As such, expressions like those below should formally be interpreted as \textit{set inclusions} rather than strict equalities---e.g., $0 \in \partial_{\bm{x}_t} \mathcal{L}$. When $u$ is differentiable at $\ztil_t$, the supergradient is a singleton and the equality holds exactly.}
Here $\partial u(\ztil_t):=[\partial_{\bm{z}_t^c}u(\ztil_t)^{\prime}, \partial_{\bm{z}_t^a}u(\ztil_t)^{\prime}, \partial_{\bm{z}_{t-1}^a}u(\ztil_t)^{\prime}]^{\prime}$.

For interior dates $t \in \{1,\ldots,T-1\}$, the relevant first-order necessary conditions associated with the Lagrangian in
\eqref{lagrangian} now follow immediately as,
\begin{align}
\label{FOC_1}
    \partial _{\bm{x}_t} \mathcal{L} = 0 
    \, \Rightarrow \, 
    &
\bm{\rho}_t 
     =    
    \beta^{t-1}
    \left( 
    \begin{bmatrix}  \bm{I}_{K \times K} \, \big\rvert \, \bm{0}_{K \times K} \end{bmatrix}
    \Atil^{\prime}
    \partial u(\ztil_t) 
    +
    \beta \begin{bmatrix} \bm{0}_{K \times K} \, \big\rvert \,  \bm{I}_{K \times K}  \end{bmatrix} 
\Atil^{\prime}
\partial u(\ztil_{t+1})       
    \right) .
\end{align}
But recall from \eqref{augmented_notation} that $\Atil$ is a $2K \times (J+J_2)$ block matrix. 
Since the supergradient $\partial u(\ztil_t)$ can also be partitioned as a $J + J_2$ block vector,
$\partial u(\ztil_t)
:=
[
(
\partial_{\bm{z}_t^c} u(\ztil_t)
)^{\prime},
(
\partial_{\bm{z}_t^a} u(\ztil_t)
)^{\prime},
\partial_{\bm{z}_{t-1}^a} u(\ztil_t)^{\prime}
]^{\prime}$,
these FOCs reduce to:
\vspace{-1em}
\begin{align}
\bm{\rho}_t 
& =
\label{FOC_3}
\beta^{t-1} \left(
\bm{A}^{\prime} 
\begin{bmatrix}
\partial_{\bm{z}_t^c} u(\ztil_t)    \\
\partial_{\bm{z}_t^a} u(\ztil_t)   
\end{bmatrix} 
+
\beta   (\bm{A}^a)^{\prime}
\begin{bmatrix} 
\partial_{\bm{z}_{t}^a} u(\ztil_{t+1}) 
\end{bmatrix}      
\right).
\end{align}
At the terminal date, the continuation term vanishes. Hence the corresponding  FOC reduces to
\begin{align}
\label{FOC_terminal}
\bm{\rho}_T
&=
\beta^{T-1}
\bm{A}^{\prime}
\begin{bmatrix}
  \partial_{\bm{z}_T^c} u(\ztil_T)    \\
  \partial_{\bm{z}_T^a} u(\ztil_T)
\end{bmatrix},
\end{align}
which is exactly the same pricing condition as in \eqref{FOC_3} under the convention $\bm{\pi}_{T+1}^1 \equiv \bm{0}$.
Since I assume $u$ to be concave, the associated KKT conditions are sufficient
for a global maximum. Hence, allowing for corner solutions, I replace the
stationarity equalities in \eqref{FOC_3} and \eqref{FOC_terminal} with inequalities to obtain the full set
of price and data pairs $\{\bm{\rho}_t,\bm{x}_t\}_{t=1}^T$ consistent with an
interior or corner solution to the consumer's maximisation problem. This gives
rise to my formal definition of \textit{consistency} in Definition
\ref{consistency_definition} and Lemma \ref{consistency_conditions}. $\qed$

\noindent
\textbf{Proof of Theorem \ref{Afriat}.}

\noindent
$(A) \Rightarrow(B)$: Assume $(A)$ holds. Then, by Lemma \ref{consistency_conditions}, $\bm{\rho}_t \geq \bm{A}^{\prime} \bm{\pi}_t^{0}+(\bm{A}^a)^{\prime} \bm{\pi}_{t+1}^{1}$ for all $t \in \{1, \ldots, T\}$, with equality for all $k$ such that $x_t^k > 0$, where $\bm{\pi}_{T+1}^1 \equiv \bm{0}$.
Element wise, this states that for all $k$ and for all $t \in \{1, \ldots, T\}$,
$
\rho^k_t
\geq 
\bm{a}_k^{\prime} \bm{\pi}_t^0 +
\bm{a}_k^{a \prime}  \bm{\pi}_{t+1}^{1},
$
and where $x_t^k > 0$,
$
\rho^k_t
= 
\bm{a}_k^{\prime} \bm{\pi}_t^0 +
\bm{a}_k^{a\prime}  \bm{\pi}_{t+1}^{1}.
$
This gives us restrictions \eqref{(B2)} and \eqref{(B3)}, respectively.

It remains to show \eqref{(B1)} holds. Using the augmented notation
$
\ztil_t := ((\bm{z}_t^c)^{\prime}, (\bm{z}_t^a)^{\prime}, (\bm{z}_{t-1}^a)^{\prime})^{\prime}
$
and
$
\pitil_t := \frac{1}{\beta^{t-1}}\left[\bm{\pi}_t^{0 \prime}, \bm{\pi}_t^{1 \prime}\right]^{\prime}
$
it follows from the shadow-price definitions in \eqref{shadow_price_0} that for any element of the superdifferential of $u$ at $\ztil_t$,
\vspace{-1em}
\begin{align*}
\partial u(\ztil_t)^{\prime}
& = \left[ \left[\partial_{\bm{z}^c_t}u(\ztil_t)^{\prime}, \partial_{\bm{z}^a_t}u(\ztil_t)^{\prime}\right], \partial_{\bm{z}^a_{t-1}}u(\ztil_t)^{\prime}  \right] 
= \frac{1}{\beta^{t-1}}\left[\bm{\pi}_t^{0 \prime}, \bm{\pi}_t^{1 \prime}\right]
= \pitil_t^{\prime}.
\end{align*}
\vspace{-0.5em}
The concavity and superdifferentiability of $u(\ztil_t)$ imply $u(\ztil_s)-u(\ztil_t) \leq \partial u(\ztil_t)^{\prime}(\ztil_s-\ztil_t)$ for all $s,t \in\{1, \ldots, T\}$. Combining this concavity result with the optimising behaviour above gives $u(\ztil_s)-u(\ztil_t) \leq \pitil_t^{\prime}(\ztil_s-\ztil_t)$ for all $s,t \in\{1, \ldots, T\}$.
Take any finite ordered cycle $(t_1,\ldots,t_M)$ from $\tau$, with $t_{M+1}=t_1$. Summing the corresponding inequalities
$
u(\ztil_{t_{m+1}})-u(\ztil_{t_m})
\leq
\pitil_{t_m}^{\prime}(\ztil_{t_{m+1}}-\ztil_{t_m}),$ for $m=1,\ldots,M
$
gives $0 \leq \sum_{m=1}^{M}\pitil_{t_m}^{\prime}(\ztil_{t_{m+1}}-\ztil_{t_m})$, which is restriction \eqref{(B1)}.

\noindent
$(B) \Rightarrow(A)$:
Restriction \eqref{(B1)} imposes that the finite dataset $\left(\pitil_t, \ztil_t\right), t = 1, \ldots, T$ is cyclically monotone as defined by \cite{browning_nonparametric_1989}. By the standard finite-sample Afriat inequalities \citep{afriat_construction_1967,diewert_afriat_1973,varian_nonparametric_1982}, this is equivalent to the existence of $T$ numbers $\{V_t\}_{t=1}^T$ such that $V_s \leq V_t + \pitil_t^{\prime}(\ztil_s-\ztil_t)$ for all $s,t \in \tau$.
Applying Afriat's theorem to the finite dataset $\left\{ \pitil_t, \ztil_t \right\}_{t \in \tau}$ then implies that there exists a locally non-satiated and concave utility function rationalising these observations. One such representation is the Afriat envelope,
\vspace{-1em}
\begin{align*}
u(\ztil)
:=
\min_{t \in \tau}
\left\{
V_t + \pitil_t^{\prime}(\ztil-\ztil_t)
\right\}.
\end{align*}
\vspace{-0.5em}
Since $u(\cdot)$ is the minimum of finitely many affine functions, it is proper, concave, and superdifferentiable. Moreover, the Afriat inequalities imply that at each observed $\ztil_t$ the $t$-th affine function supports $u$, so that
$
\pitil_t \in \partial u(\ztil_t), \, \forall \, t \in \tau.
$
Selecting one such supergradient at each observed point and partitioning it conformably yields the following relationships between shadow prices and supergradients of the utility function:
\begin{align}
\label{cyc_mon_1}
& \bm{\pi}_t^{0} = 
\beta^{t-1}
\begin{bmatrix}
  \partial_{\bm{z}_t^c} u(\ztil_t)    \\
  \partial_{\bm{z}_t^a} u(\ztil_t)   
\end{bmatrix}  
\qquad \text{and} \qquad
\bm{\pi}_{t}^{1} 
= 
\beta^{t-1} 
\begin{bmatrix} 
\partial_{\bm{z}_{t-1}^a} u(\ztil_{t}) 
\end{bmatrix}.
\end{align}
Since these hold for all $t \in \tau$, the latter condition can be forwarded one period to obtain,
\begin{align}
\label{cyc_mon_3}
& 
\bm{\pi}_{t+1}^{1}
=
\beta^{t}
\begin{bmatrix} 
\partial_{\bm{z}_{t}^a} u(\ztil_{t+1}) 
\end{bmatrix}
\end{align}
for all $t \in \{1, \ldots, T-1\}$. 
Substituting expressions \eqref{cyc_mon_1} and \eqref{cyc_mon_3} into condition \eqref{(B2)} re-written in matrix form,
$
\bm{\rho}_t 
\geq 
\bm{A}^{\prime} \bm{\pi}_{t}^0 + 
(\bm{A}^a)^{\prime}  \bm{\pi}_{t+1}^{1},
$
I obtain,
\begin{align}
\label{consistency_proof}
\bm{\rho}_t 
& \geq 
\bm{A}^{\prime}
\beta^{t-1}
\begin{bmatrix}
  \partial_{\bm{z}_t^c} u(\ztil_t)    \\
  \partial_{\bm{z}_t^a} u(\ztil_t)   
\end{bmatrix} 
+ 
(\bm{A}^a)^{\prime} 
\beta^{t}
\begin{bmatrix} 
\partial_{\bm{z}_{t}^a} u(\ztil_{t+1}) 
\end{bmatrix}.
\end{align}
At the terminal date, condition \eqref{(B2)} reduces under $\bm{\pi}_{T+1}^1 \equiv \bm{0}$ to
$
\bm{\rho}_T
\geq
\bm{A}^{\prime}
\beta^{T-1}
\begin{bmatrix}
  \partial_{\bm{z}_T^c} u(\ztil_T)^{\prime}    \quad
  \partial_{\bm{z}_T^a} u(\ztil_T)^{\prime}
\end{bmatrix}^{\prime}.
$
In the event that a good $k$ is consumed in strictly positive amounts, then substituting expressions \eqref{cyc_mon_1} and \eqref{cyc_mon_3} into condition \eqref{(B3)} yields the same expression as in \eqref{consistency_proof}, but with equality; the terminal-date analogue again follows by setting $\bm{\pi}_{T+1}^1 \equiv \bm{0}$.
By Lemma \ref{consistency_conditions}, this means that the data $\left\{\bm{\rho}_t; \bm{x}_t\right\}_{t \in \{1, \ldots, T\}}$ are consistent with the one-lag habits model for given technology $\bm{A}$. $\qed$

\noindent
\textbf{Proof of Theorem \ref{Afriat_Missing_Prices}.}
\label{sec:missing_prices_proof}

\noindent
$(A^+) \Rightarrow(B^+)$ : Given $\{\bm{\rho}^0_t\}_{t \in\{1, \ldots, T\}}$ I have a full set of prices, $\{\bm{\rho}_t\}_{t \in\{1, \ldots, T\}}$, such that the data satisfies the model. Hence, by Theorem \ref{Afriat}, condition $(B)$ holds. Hence, $(B^+)$ also holds.

\noindent
$(B^+) \Rightarrow(A^+)$ : I have shadow discounted prices $\left\{\bm{\pi}_t^{0}\right\}_{t \in\{1, \ldots, T\}}$ and $\left\{\bm{\pi}_t^{1}\right\}_{t \in\{1, \ldots, T\}}$ such that \eqref{(B1_missing)} and \eqref{(B2_missing)} hold. Use these shadow prices to construct the unobserved prices as $\bm{\rho}^0_t =(\bm{B}^0_t)^{\prime} \bm{\pi}_t^0 +(\bm{B}^{a, 0}_t)^{\prime}  \bm{\pi}_{t+1}^{1}$, with $\bm{\pi}_{T+1}^1 \equiv \bm{0}$. Using these constructed prices, it follows by comparison with Theorem \ref{Afriat} that (A$^+$) holds. $\qed$


\noindent
\textbf{Habits over goods as a special case of the characteristics model (Definition \ref{consistency_goods}).}

The (one-lag) habits-over-characteristics model nests the habits-over-goods model of \cite{crawford_habits_2010} as a special case. To see this, consider a trivial characteristics model where $J = K$ and the technology matrix $\bm{A}$ is the $J \times J$ identity matrix. Then:
\[
\bm{z}_t = 
\begin{bmatrix}
\bm{z}_t^c \\
\bm{z}_t^a
\end{bmatrix}
= 
\begin{bmatrix}
\bm{x}_t^c \\
\bm{x}_t^a
\end{bmatrix}
= \bm{x}_t,
\]
where $\bm{x}_t^c$ and $\bm{x}_t^a$ denote the $J_1$ non-habit-forming and $J_2$ habit-forming goods, respectively. Under this assumption, the matrices $(\bm{A}^c)^{\prime}$ and $(\bm{A}^a)^{\prime}$ take on block-structured forms. The matrix $(\bm{A}^c)^{\prime}$ is a $K \times J_1$ matrix whose first $J_1$ rows form a $J_1 \times J_1$ identity matrix, with all remaining entries equal to zero. Conversely, $(\bm{A}^a)^{\prime}$ is a $K \times J_2$ matrix whose bottom $J_2$ rows comprise a $J_2 \times J_2$ identity matrix, while the entries in the first $K - J_2$ rows are zero.

Substituting the identity structure of $\bm{A}$ into Equation \eqref{foc_market_prices} and defining the augmented bundle $\bar{\bm{x}}_t := \left( \bm{x}_t^{c \prime}, \bm{x}_t^{a \prime}, \bm{x}_{t-1}^{a \prime} \right)^{\prime} = \ztil_t$, the FOC becomes,
\begin{align*}
\bm{\rho}_t 
&  =
\begin{bmatrix}
\bm{\rho}_t^c \\
\bm{\rho}_t^a
\end{bmatrix}
 \geq
\begin{bmatrix}
\bm{\pi}_{t}^{c, 0} \\
\bm{0}_{K - J_1}
\end{bmatrix} 
+
\begin{bmatrix}
\bm{0}_{K - J_2} \\
\bm{\pi}_{t}^{a, 0}
\end{bmatrix} 
+
\begin{bmatrix}
\bm{0}_{K - J_2} \\
\bm{\pi}_{t+1}^{1}
\end{bmatrix} 
=
\beta^{t-1} 
\begin{bmatrix}
\partial_{\bm{x}_t^c} u(\bar{\bm{x}}_t) \\
\bm{0}_{K - J_1}
\end{bmatrix} 
+
\beta^{t-1}
\begin{bmatrix}
\bm{0}_{K - J_2} \\
\partial_{\bm{x}_t^a} u(\bar{\bm{x}}_t) 
\end{bmatrix} 
+
\beta^{t} 
\begin{bmatrix}
\bm{0}_{K - J_2} \\
\partial_{\bm{x}_t^a} u(\bar{\bm{x}}_{t+1}) 
\end{bmatrix},
\end{align*}
where $\bm{\pi}_{t}^{c, 0}$, $\bm{\pi}_{t}^{a, 0}$ and $\bm{\rho}_t^{c}$, $\bm{\rho}_t^{a}$ are the subvectors of $\bm{\pi}_t^0$ and $\bm{\rho}_t$ corresponding to non-habit-forming and habit-forming goods, respectively, and the final equality follows using the shadow-price definitions in \ref{shadow_price_0}.
For all $k$ such that $x_t^k > 0$, these inequalities hold with equality. Thus, under the restriction $J = K$ and $\bm{A} = \bm{I}_J$, Definition \ref{consistency_definition} simplifies as in Definition \ref{consistency_goods}.

\noindent
\textbf{Proof of Corollary \ref{AfriatGoods}.}
\label{proof_afriat_goods}

\noindent
Follows directly from Theorem \ref{Afriat} under the restriction $J = K$ and $\bm{A} = \bm{I}_J$. Equivalently, one may derive it from Definition \ref{consistency_goods} by replicating the proof of Theorem \ref{Afriat}. $\qed$

\noindent
\textbf{Intertemporally separable preferences over characteristics (Definition \ref{consistency_sep}).}

To formalise this nesting result, assume $\bm{z}_t = \bm{z}_t^c$ and embed the model in a lifecycle framework with a single intertemporal budget constraint and maintain the normalisation that the marginal utility of lifetime wealth equals one.
Then the shadow price in Equation \eqref{shadow_price_0} reduces to
$
\bm{\pi}_t = \partial_{\bm{z}_t} u(\bm{z}_t),
$
noting that $\beta$ no longer appears due to intertemporal separability.\footnote{This is equivalent to setting $\beta = 1$, which is without loss of generality in this setting.} Under these assumptions, Definition \ref{consistency_definition} reduces to that in Definition \ref{consistency_sep}.

\noindent
\textbf{Proof of Corollary \ref{AfriatSep}.}
\label{proof_afriat_sep}

\noindent
Corollary \ref{AfriatSep} is immediate from Theorem \ref{Afriat} upon setting $J_1 = J$ and $\beta = 1$. 

\noindent
I now show that this Corollary is equivalent to Theorem 1 of \cite{blow_revealed_2008} when the static characteristics model is embedded in a lifecycle framework. Equivalence between (A'') and the P'' condition in their theorem follows directly from Definition \ref{consistency_sep}. Equivalence of conditions \eqref{(B2sep)} and \eqref{(B3sep)} with (A2) and (A3) in \cite{blow_revealed_2008} is straightforward after aligning notation.

\noindent
The key step is to show that condition \eqref{(B1sep)} is equivalent to condition (A1) in \cite{blow_revealed_2008}, which states that there exist $T$ scalars $V_t$ and $T$ vectors $\bm{\pi}_t$ such that:
\vspace{-1em}
\begin{align}
\label{(A1_Blow)}
V_s 
& \leq V_t + \lambda_t \bm{\pi}_t^{\prime}(\bm{A} \bm{x}_s - \bm{A} \bm{x}_t) \quad \forall \, s, t. \tag{A1}
\end{align}
In \citet{blow_revealed_2008}, $\lambda_t$ denotes the (time-$t$) marginal utility of wealth. Under my single lifetime budget constraint I have $\lambda_t \equiv \lambda$, and I maintain the normalisation $\lambda = 1$ throughout.

To show that \eqref{(B1sep)} implies \eqref{(A1_Blow)}, note that \eqref{(B1sep)} imposes cyclical monotonicity on the finite dataset $\left\{ \bm{\pi}_t, \bm{z}_t \right\}_{t=1}^T$. By the standard finite-sample Afriat inequalities \citep{afriat_construction_1967,diewert_afriat_1973,varian_nonparametric_1982}, this is equivalent to the existence of $T$ numbers $\{V_t\}_{t=1}^T$ such that $V_s \leq V_t + \bm{\pi}_t^{\prime}(\bm{z}_s - \bm{z}_t)$ for all $s,t$. Since $\bm{z}_t = \bm{A}\bm{x}_t$ in the static characteristics model, this becomes $V_s \leq V_t + \bm{\pi}_t^{\prime}(\bm{A}\bm{x}_s - \bm{A}\bm{x}_t)$ for all $s,t$, which is \eqref{(A1_Blow)} under the normalisation $\lambda_t \equiv 1$.

\noindent
Conversely, suppose \eqref{(A1_Blow)} holds. 
Under $\lambda=1$, rearranging yields $0 \leq (V_t - V_s) + \bm{\pi}_t^{\prime}(\bm{A} \bm{x}_s - \bm{A} \bm{x}_t)$.
Take any finite ordered cycle $(t_1,\ldots,t_M)$ with $t_{M+1}=t_1$. Summing the corresponding inequalities over $m=1,\ldots,M$ makes the Afriat numbers telescope, yielding
$0 \leq \sum_{m=1}^{M}\bm{\pi}_{t_m}^{\prime}(\bm{A}\bm{x}_{t_{m+1}}-\bm{A}\bm{x}_{t_m})$, which is \eqref{(B1sep)}.

\noindent
I conclude that Corollary \ref{AfriatSep} provides an equivalent characterisation to \cite{blow_revealed_2008} when the static characteristics model is embedded in a lifecycle framework. This highlights the additional empirical restrictions imposed by intertemporal optimisation: a consumer consistent with intratemporal utility maximisation over characteristics must also allocate expenditure across periods to maximise lifetime utility. $\qed$

\vspace{-1em}
\section{Additional data description and summary statistics}
\label{app:data_appendix}

This appendix reports additional descriptive statistics for the scanner panel, including sample selection details, purchase intensity, brand concentration, price distributions, and characteristics distributions.

\vspace{-1em}
\subsection{Sample selection}

I begin by documenting the construction of the estimation sample from the raw 2010--2011 scanner panel. Table \ref{tab:sample_cuts} reports the sequential filtering steps and resulting sample sizes. The first four restrictions impose minimum activity and panel-length requirements required for feasibility of the revealed preference framework and are therefore purely mechanical. The final restriction---dropping households whose purchased UPCs lack complete characteristics data---is the only potentially selective cut and is evaluated separately using balance tests.

As shown in Table \ref{tab:sample_cuts}, most sample attrition occurs due to mechanical feasibility restrictions, while the potentially selective exclusion due to missing characteristics data affects a comparatively smaller subset of households.

\begin{table}[H]
\centering
\caption{\label{tab:sample_cuts}
Sample construction and filtering}
\fontsize{10}{12}\selectfont
\begin{tabular}{lccc}
\toprule
Step & Households & UPCs (K) & Purely  \\
 &  &  &  mechanical \\
\midrule
raw scanner (2010--11 combined) &
4697 & 1139 & Yes \\

drop HHs with only one observed purchase occasion &
4505 & 1136 & Yes \\

drop HHs with only one constructed time period ($T_i \leq 1$) &
3673 & 1131 & Yes \\

drop HHs with $T_i \leq 2$ (not meaningful for habit formation) &
2972 & 1121 & Yes \\

drop HHs who purchase UPCs with missing characteristics data &
2282 & 801 & No \\
\bottomrule
\end{tabular}
\footnotesize
\begin{flushleft}
\textit{Notes:} Rows report sequential sample restrictions applied to the 2010--11 scanner panel. $T_i$ denotes the number of household-specific constructed time periods. The restriction $T_i \le 2$ ensures at least two transitions for identification of one-lag habit formation. ``Purely mechanical'' indicates whether the restriction is driven by panel structure rather than missing product characteristics.
\end{flushleft}
\end{table}

\vspace{-2em}
\begin{table}[H]
\centering
\caption{\label{tab:balance_demo}
Demographic balance: excluded versus final sample}
\fontsize{10}{12}\selectfont
\begin{threeparttable}
\begin{tabular}[t]{lcccc}
\toprule
\textbf{Characteristic} &
\makecell[c]{\textbf{Overall}\\ N = 2,972} &
\makecell[c]{\textbf{Excluded}\\ N = 690} &
\makecell[c]{\textbf{Final Sample}\\ N = 2,282} &
\textbf{p-value}\\
\midrule

\cellcolor{gray!10}\textbf{Family size} & \cellcolor{gray!10}{} & \cellcolor{gray!10}{} & \cellcolor{gray!10}{} & \cellcolor{gray!10}{0.012}\\
\hspace{1em}1   & 550 (19\%)   & 118 (17\%) & 432 (19\%)   & \\
\cellcolor{gray!10}\hspace{1em}2
                & \cellcolor{gray!10}{1,362 (46\%)} & \cellcolor{gray!10}{293 (42\%)} & \cellcolor{gray!10}{1,069 (47\%)} & \cellcolor{gray!10}{}\\
\hspace{1em}3+  & 1,060 (36\%) & 279 (40\%) & 781 (34\%)   & \\

\cellcolor{gray!10}\textbf{Marital status} & \cellcolor{gray!10}{} & \cellcolor{gray!10}{} & \cellcolor{gray!10}{} & \cellcolor{gray!10}{$>0.9$}\\
\hspace{1em}Married     & 307 (10\%)  & 70 (10\%)  & 237 (10\%)  & \\
\cellcolor{gray!10}\hspace{1em}Not married
                        & \cellcolor{gray!10}{2,658 (90\%)} & \cellcolor{gray!10}{617 (90\%)} & \cellcolor{gray!10}{2,041 (90\%)} & \cellcolor{gray!10}{}\\

\textbf{Region} &  &  &  & $<0.001$\\
\cellcolor{gray!10}\hspace{1em}Eau Claire
                & \cellcolor{gray!10}{1,597 (54\%)} & \cellcolor{gray!10}{278 (40\%)} & \cellcolor{gray!10}{1,319 (58\%)} & \cellcolor{gray!10}{}\\
\hspace{1em}Pittsfield
                & 1,375 (46\%) & 412 (60\%) & 963 (42\%) & \\
\cellcolor{gray!10}\textbf{Children} & \cellcolor{gray!10}{} & \cellcolor{gray!10}{} & \cellcolor{gray!10}{} & \cellcolor{gray!10}{0.032}\\
\hspace{1em}Any children & 667 (22\%)   & 176 (26\%) & 491 (22\%) & \\
\cellcolor{gray!10}\hspace{1em}No children
                        & \cellcolor{gray!10}{2,305 (78\%)} & \cellcolor{gray!10}{514 (74\%)} & \cellcolor{gray!10}{1,791 (78\%)} & \cellcolor{gray!10}{}\\

\textbf{Occupation} &  &  &  & 0.9\\
\cellcolor{gray!10}\hspace{1em}Blue collar/service
                & \cellcolor{gray!10}{1,149 (39\%)} & \cellcolor{gray!10}{267 (39\%)} & \cellcolor{gray!10}{882 (39\%)} & \cellcolor{gray!10}{}\\
\hspace{1em}Office/sales (sales/clerical)
                & 454 (15\%) & 100 (14\%) & 354 (16\%) & \\
\cellcolor{gray!10}\hspace{1em}Other/unknown
                & \cellcolor{gray!10}{215 (7.2\%)} & \cellcolor{gray!10}{55 (8.0\%)} & \cellcolor{gray!10}{160 (7.0\%)} & \cellcolor{gray!10}{}\\
\hspace{1em}Retired
                & 254 (8.5\%) & 62 (9.0\%) & 192 (8.4\%) & \\
\cellcolor{gray!10}\hspace{1em}White collar (prof/manager)
                & \cellcolor{gray!10}{900 (30\%)} & \cellcolor{gray!10}{206 (30\%)} & \cellcolor{gray!10}{694 (30\%)} & \cellcolor{gray!10}{}\\

\textbf{Education} &  &  &  & 0.8\\
\addlinespace
\cellcolor{gray!10}\hspace{1em}$\le$ High school
                & \cellcolor{gray!10}{1,737 (61\%)} & \cellcolor{gray!10}{400 (60\%)} & \cellcolor{gray!10}{1,337 (61\%)} & \cellcolor{gray!10}{}\\
\hspace{1em}College+
                & 203 (7.1\%) & 46 (6.9\%) & 157 (7.2\%) & \\
\cellcolor{gray!10}\hspace{1em}Some college/tech
                & \cellcolor{gray!10}{911 (32\%)} & \cellcolor{gray!10}{219 (33\%)} & \cellcolor{gray!10}{692 (32\%)} & \cellcolor{gray!10}{}\\

\bottomrule
\end{tabular}
\footnotesize
\begin{flushleft}
\textit{Notes:} Counts are n (\%). The ``Excluded'' column reports households removed due to missing UPC characteristic information; all other panel-length restrictions are satisfied in both groups. p-values are from Pearson chi-squared tests of equality across excluded and retained households.
\end{flushleft}
\end{threeparttable}
\end{table}

\vspace{-1em}
To assess whether exclusion due to missing characteristics data induces observable selection, Table \ref{tab:balance_demo} compares demographic characteristics between the baseline sample---defined as households satisfying all mechanical feasibility restrictions---and the final analysis sample. Reported p-values correspond to Pearson chi-squared tests of equality in distributions across groups.

Across most dimensions, observable characteristics are similar between excluded and retained households. While some differences arise along geographic location and household composition, these differences are modest in magnitude, and joint tests fail to reject equality across the majority of demographic characteristics. Overall, the balance results suggest limited scope for selection on observables arising from the final exclusion.

\vspace{-1em}
\subsection{Characteristics summary statistics}

Tables~\ref{tab:nutritional_stats_nutrients} and~\ref{tab:nutritional_stats_binary_attr} report descriptive statistics for the 23 characteristics used in the analysis. Nutritional attributes are expressed per 100g serving. Calories cluster tightly across products, while carbohydrates and sugar are generally high and fat content is low. Sodium exhibits substantial cross-product variation. Among binary attributes, whole-grain indicators and health-related descriptors are common, whereas organic and gluten-free labels are rare. Brand indicators reflect the dominance of Kellogg's and General Mills in the sample.

\begin{table}[!h]
\centering
\caption{\label{tab:nutritional_stats_nutrients}
Nutritional characteristics of cereal products (per 100g)}
\centering
\fontsize{10}{12}\selectfont
\begin{tabular}[t]{>{\raggedright\arraybackslash}p{3cm}>{\raggedright\arraybackslash}p{0.5cm}>{\raggedleft\arraybackslash}p{2cm}>{\raggedleft\arraybackslash}p{2cm}>{\raggedleft\arraybackslash}p{2cm}>{\raggedleft\arraybackslash}p{2cm}>{\raggedleft\arraybackslash}p{2cm}}
\toprule
Characteristic & units & mean & sd & q1 & median & q3\\
\midrule
\cellcolor{gray!10}{calories} & \cellcolor{gray!10}{kcal} & \cellcolor{gray!10}{379} & \cellcolor{gray!10}{38} & \cellcolor{gray!10}{364} & \cellcolor{gray!10}{379} & \cellcolor{gray!10}{400}\\
carbohydrates & g & 81 & 7 & 78 & 82 & 85\\
\cellcolor{gray!10}{total fat} & \cellcolor{gray!10}{g} & \cellcolor{gray!10}{4} & \cellcolor{gray!10}{4} & \cellcolor{gray!10}{2} & \cellcolor{gray!10}{3} & \cellcolor{gray!10}{5}\\
saturated fat & g & 1 & 1 & 0 & 0 & 1\\
\cellcolor{gray!10}{fiber} & \cellcolor{gray!10}{g} & \cellcolor{gray!10}{8} & \cellcolor{gray!10}{5} & \cellcolor{gray!10}{3} & \cellcolor{gray!10}{7} & \cellcolor{gray!10}{10}\\
\addlinespace
protein & g & 8 & 3 & 6 & 7 & 10\\
\cellcolor{gray!10}{sodium} & \cellcolor{gray!10}{mg} & \cellcolor{gray!10}{435} & \cellcolor{gray!10}{239} & \cellcolor{gray!10}{277} & \cellcolor{gray!10}{467} & \cellcolor{gray!10}{600}\\
sugar & g & 25 & 12 & 18 & 25 & 33\\
\bottomrule
\end{tabular}
\footnotesize
\begin{flushleft}
\textit{Notes:} Summary statistics computed across the UPCs in the final sample after excluding two products with missing serving-size entries. Nutritional values are standardised per 100g serving and rounded to the nearest integer.
\end{flushleft}
\end{table}

\vspace{-1em}

\begin{table}[!h]
\centering
\caption{\label{tab:nutritional_stats_binary_attr}
Binary product characteristics}
\centering
\fontsize{10}{12}\selectfont
\begin{tabular}[t]{>{\raggedright\arraybackslash}p{3.5cm}>{\raggedleft\arraybackslash}p{3cm}>{\raggedright\arraybackslash}p{3.5cm}>{\raggedright\arraybackslash}p{3cm}}
\toprule
Characteristic & Share (\%) & Characteristic & Share (\%)\\
\midrule
\cellcolor{gray!10}{whole grain} & \cellcolor{gray!10}{57} & \cellcolor{gray!10}{honey flavour} & \cellcolor{gray!10}{15}\\
organic & 8 & gluten-free & 3\\
\cellcolor{gray!10}{oat-based} & \cellcolor{gray!10}{24} & \cellcolor{gray!10}{Kellogg's} & \cellcolor{gray!10}{18}\\
granola & 10 & General Mills & 13\\
\cellcolor{gray!10}{health halos} & \cellcolor{gray!10}{33} & \cellcolor{gray!10}{Post} & \cellcolor{gray!10}{11}\\
\addlinespace
fruity & 6 & Quaker & 7\\
\cellcolor{gray!10}{nutty} & \cellcolor{gray!10}{11} & \cellcolor{gray!10}{Kashi} & \cellcolor{gray!10}{4}\\
chocolatey & 5 &  & \\
\bottomrule
\end{tabular}
\footnotesize
\begin{flushleft}
\textit{Notes:} Shares denote the percentage of the $K=801$ UPCs in the final sample exhibiting the indicated characteristic. Values are rounded to the nearest integer.
\end{flushleft}
\end{table}

\vspace{-1em}
\subsection{Purchase intensity, brand concentration, and prices}

Figure~\ref{fig:distribution_purchases_per_period_app} shows the distribution of purchase counts per household-period. Figure~\ref{fig:units_exp_density_app} reports the distributions of average period consumption and expenditure. Figure~\ref{fig:brand_hist_app} reports the distribution of the number of distinct brands purchased per period. Figure~\ref{fig:avg_price_density_app} plots the distribution of average unit values across UPCs.

\vspace{-1em}

\begin{figure}[H]
\centering
\vspace{1em}
\caption{Purchase intensity, brand concentration, and prices}
\vspace{-0.5em}
\hspace*{-0.03\textwidth}%
\begin{subfigure}[t]{0.49\textwidth}
\centering
\includegraphics[width=\textwidth]{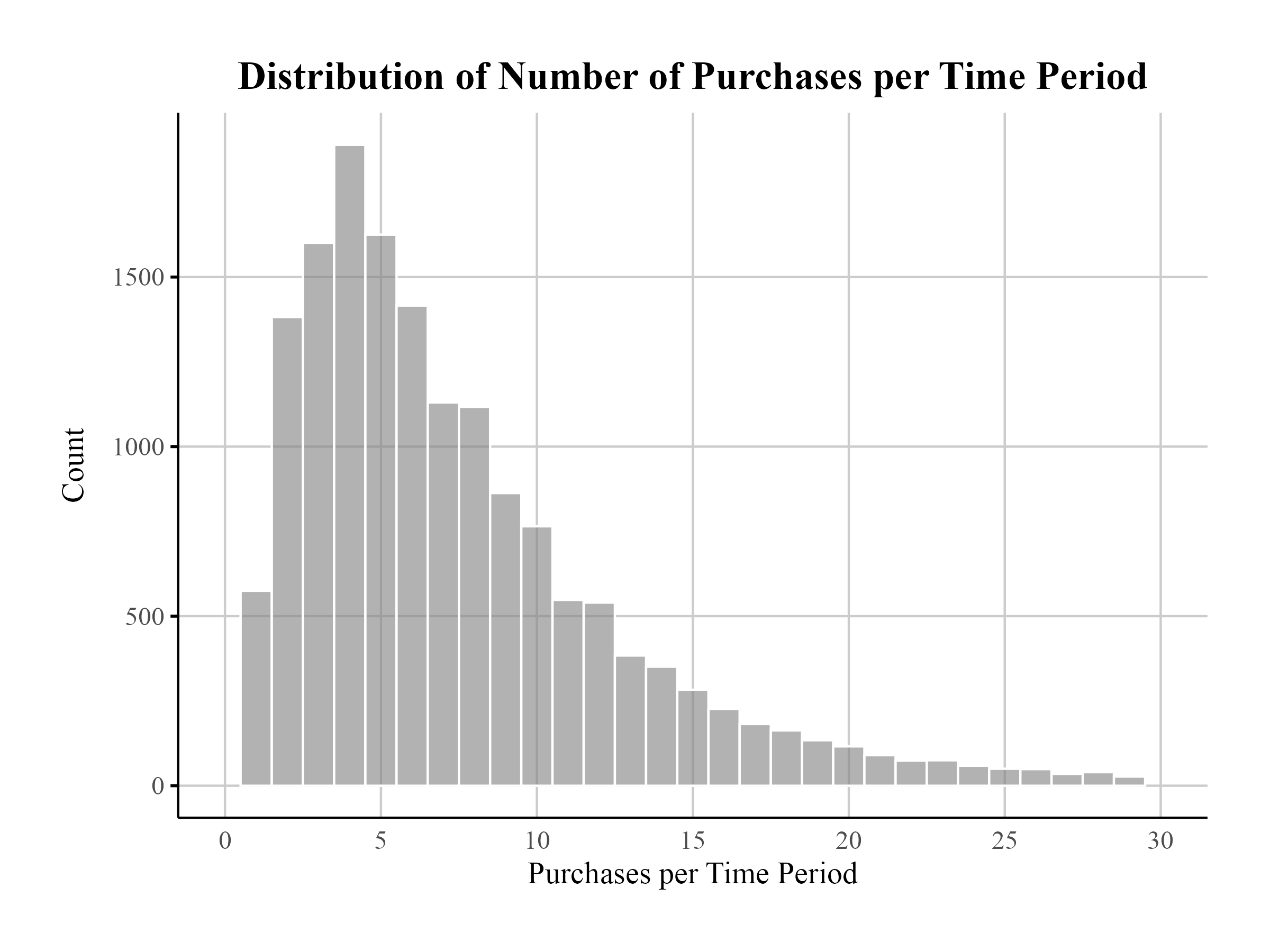}
\caption{Units purchased per household-period}
\label{fig:distribution_purchases_per_period_app}
\end{subfigure}
\hfill
\begin{subfigure}[t]{0.49\textwidth}
\centering
\includegraphics[width=1.1\textwidth]{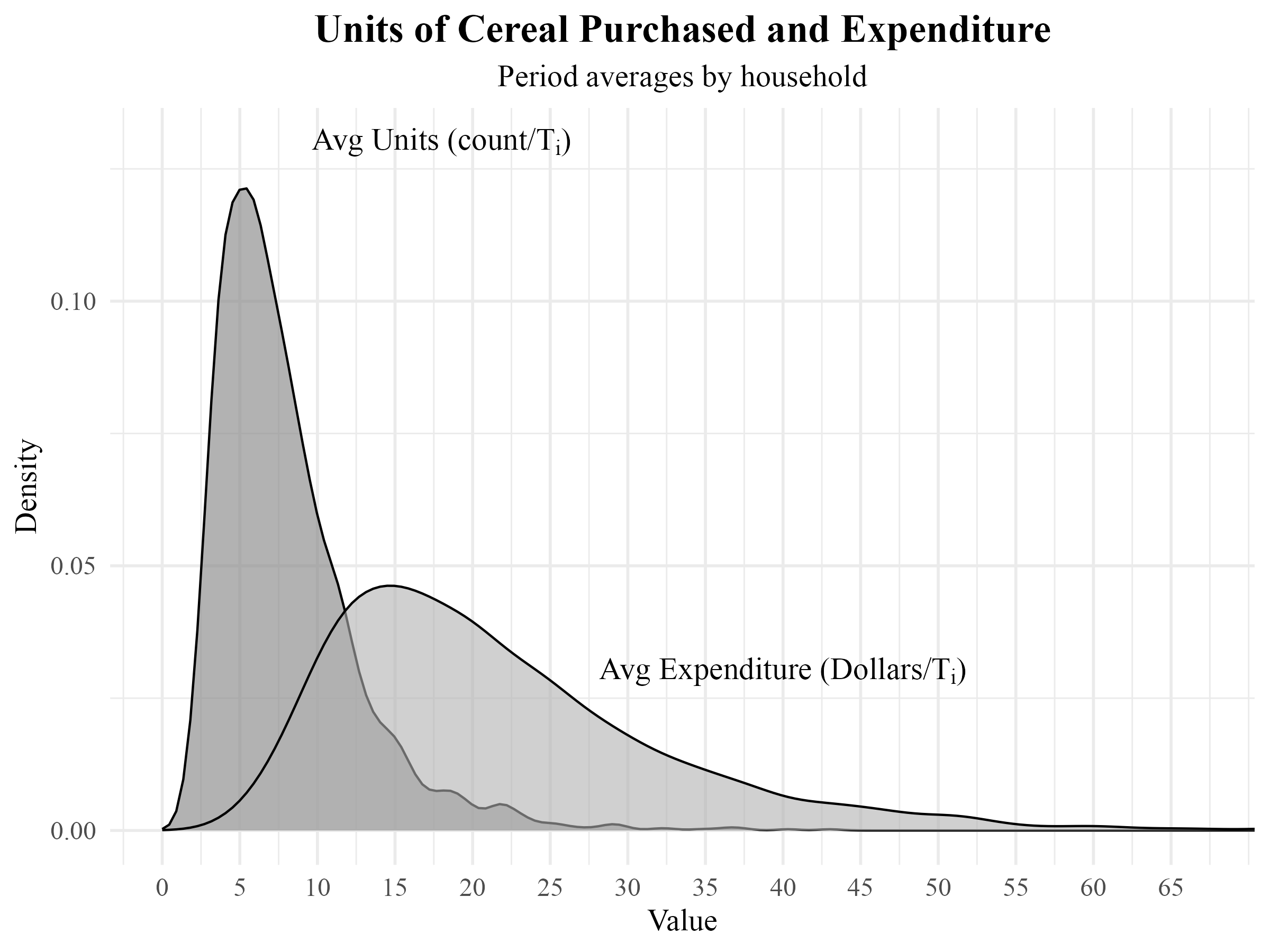}
\caption{Average purchasing intensity across households}
\label{fig:units_exp_density_app}
\end{subfigure}

\vspace{-0.25em}
\hspace*{-0.03\textwidth}%
\begin{subfigure}[t]{0.49\textwidth}
\centering
\includegraphics[width=\textwidth]{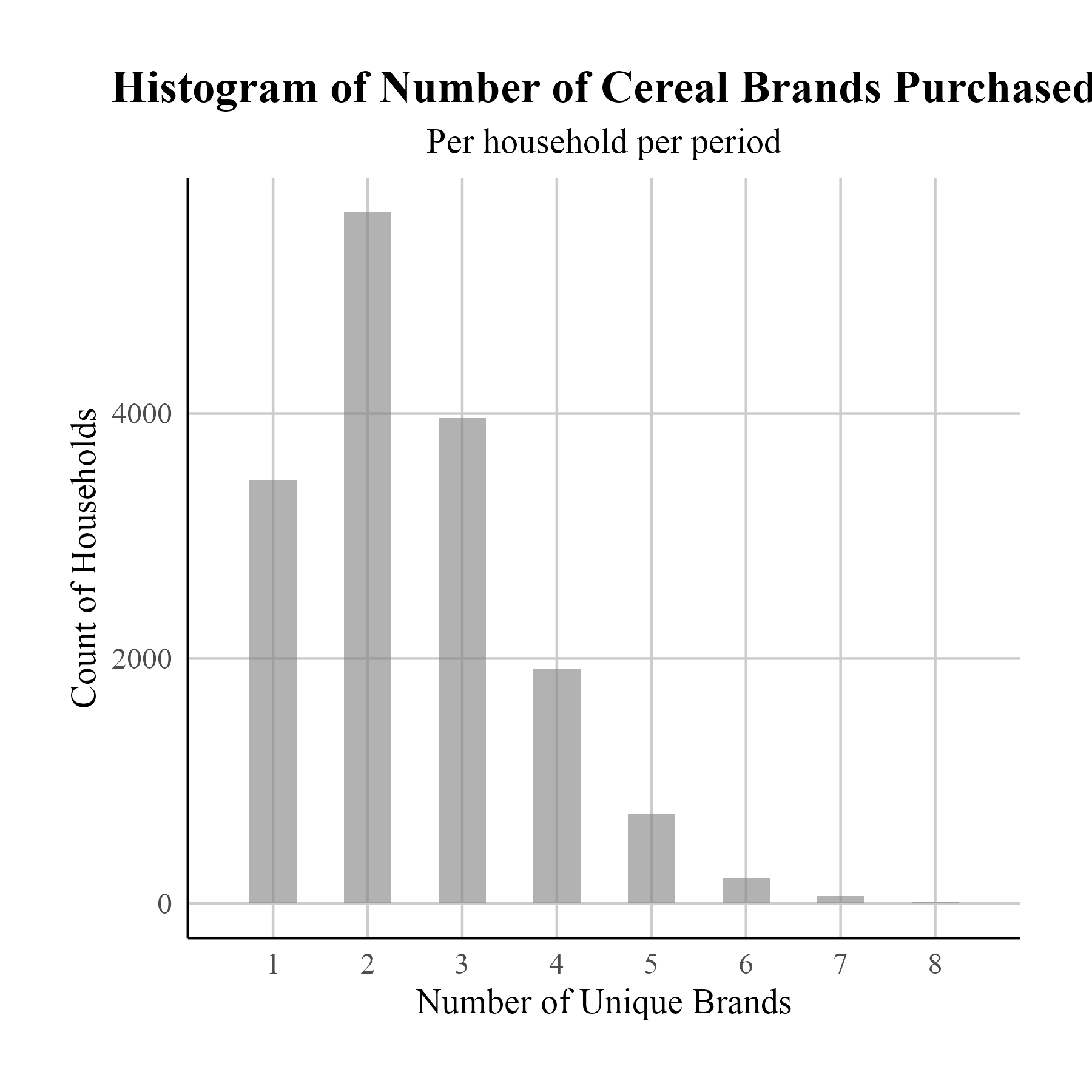}
\caption{Distinct brands purchased per household-period}
\label{fig:brand_hist_app}
\end{subfigure}
\hfill
\begin{subfigure}[t]{0.49\textwidth}
\centering
\includegraphics[width=1.2\textwidth]{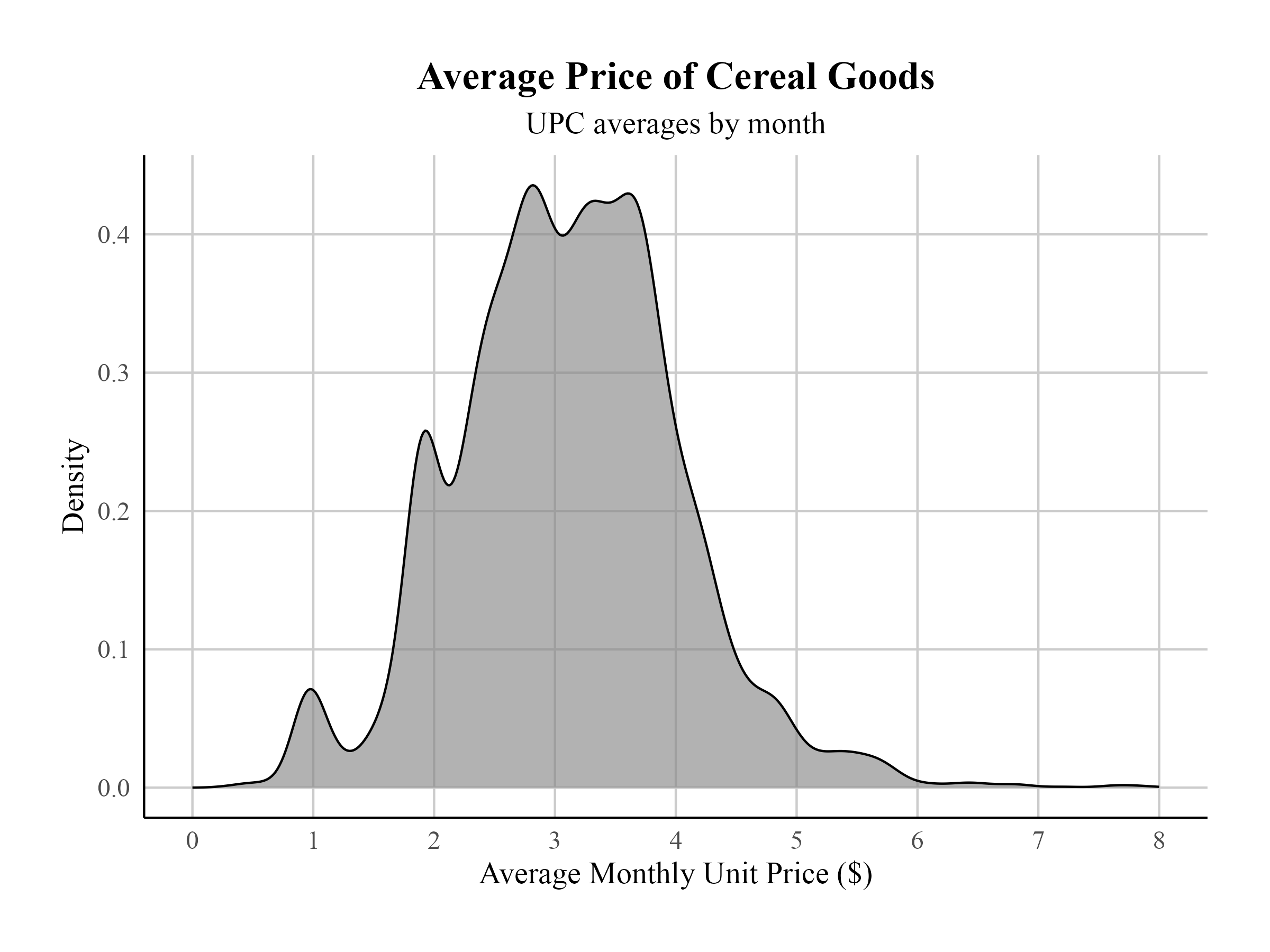}
\caption{Average unit values by UPC}
\label{fig:avg_price_density_app}
\end{subfigure}
\footnotesize
\begin{flushleft}
\textit{Notes:} Panel (a) reports a histogram of total cereal units purchased by household $i$ in period $t$, pooled across all households and constructed time periods. Panel (b) shows the distributions of household-level averages of units purchased per period and expenditure per period across the sample. Panel (c) reports a histogram of the number of distinct cereal brands purchased by a household within a constructed time period. Panel (d) shows the distribution of average unit values (expenditure divided by quantity) computed at the UPC level over the sample period.
\end{flushleft}
\end{figure}

\clearpage
\end{appendices}

\newpage
\bibliographystyle{apalike}
\bibliography{./tex/references}

\end{document}